\newcommand{\set}[1]{\{ #1 \}}
\newcommand{\val}{\mathit{val}}
\newcommand{\N}{\mathbb{N}}
\newcommand{\G}{\mathrm{G}}
\newcommand{\M}{\mathcal{M}}
\renewcommand{\S}{\mathbb{S}}
\newcommand{\Until}{\ \mathrm{Until}\ }
\newcommand{\rank}{\textrm{rank}}
\newcommand{\summ}{\textrm{sum}}
\newcommand{\up}{\mu}
\newcommand{\co}{\textrm{Co}}
\newcommand{\buc}{\textrm{B\"uchi}}
\newcommand{\safe}{\textrm{Safe}}
\newcommand{\reach}{\textrm{Reach}}
\newcommand{\parity}{\textrm{Parity}}
\newcommand{\mem}{\textrm{mem}}
\newcommand{\VE}{V_{E}}
\newcommand{\VA}{V_{A}}
\newcommand{\W}{\mathcal{W}}
\newcommand{\WE}{\W_{E}}
\newtheorem{fact}[theorem]{Fact}
\title{Trading Bounds for Memory in\\
Games with Counters}
\author{Nathana\"el Fijalkow\inst{1,2} \and Florian Horn\inst{1} \and Denis Kuperberg\inst{2,3} \and Micha{\l} Skrzypczak\inst{1,2}}
\institute{LIAFA, Universit{\'e} Paris 7 \and Institute of Informatics, University of Warsaw \and Onera/DTIM, Toulouse and IRIT, University of Toulouse}
\begin{document}

\maketitle

\begin{abstract}
We study two-player games with counters, where the objective of the first player is that the counter values remain bounded.
We investigate the existence of a trade-off between the size of the memory and the bound achieved on the counters,
which has been conjectured by Colcombet and Loeding.

We show that unfortunately this conjecture does not hold: there is no trade-off between bounds and memory,
even for finite arenas.
On the positive side, we prove the existence of a trade-off for the special case of thin tree arenas.
This allows to extend the theory of regular cost functions over thin trees,
and obtain as a corollary the decidability of cost monadic second-order logic over thin trees.
\end{abstract}

%%%%%%%%%%%%%%%%%%%%%%%%%%%%%%%%%%%%%%%%%
\section{Introduction}
\label{sec:intro}
This paper studies finite-memory determinacy for games with counters.
The motivation for this investigation comes from the theory of regular cost functions,
which we discuss now.

\medskip
\textbf{Regular cost functions.} The theory of regular cost functions is a \textit{quantitative} extension of the notion of regular languages,
over various structures (words and trees).
More precisely, it expresses \textit{boundedness questions}. A typical example of a boundedness question is:
given a regular language $L \subseteq \set{a,b}^*$, does there exist a bound $N$ such that all words from $L$ contain at most $N$ occurences of $a$?

This line of work has already a long history: it started in the 80s, when Hashiguchi, 
and then later Leung, Simon and Kirsten solved the \textit{star-height problem} 
by reducing it to boundedness questions~\cite{Hashiguchi90,Simon94,Leung91,Kirsten05}.
Both the logics MSO+$\mathbb{U}$ and later cost MSO (as part of the theory of regular cost functions)
emerged in this context~\cite{Bojanczyk04,BojanczykColcombet06,Colcombet09,Colcombet13a,ColcombetLoeding10},
as quantitative extensions of the notion of regular languages allowing to express boundedness questions.
%In other words, both logics were designed to offer the tools required to solve boundedness questions such as
%the star-height problem.

Consequently, developing the theory of regular cost functions comes in two flavours:
the first is using it to reduce various problems to boundedness questions,
and the second is obtaining decidability results for the boundedness problem for cost MSO
over various structures.

For the first point, many problems have been reduced to boundedness questions.
The first example is the star-height problem over words~\cite{Kirsten05} and over trees~\cite{ColcombetLoeding10},
followed for instance by the boundedness question for fixed points of monadic formulae 
over finite and infinite words and trees~\cite{BlumensathOttoWeyer14}.
The most important problem that has been reduced is to decide the Mostowski hierarchy for infinite trees~\cite{ColcombetLoeding08}.

For the second point, it has been shown that over finite words and trees, 
a significant part of the theory of regular languages can successfully be extended to the theory of regular cost functions, 
yielding notions of regular expressions, automata, semigroups and logics that all have the same expressive power,
and that extend the standard notions.
In both cases, algorithms have been constructed to answer boundedness questions.

However, extending the theory of regular cost functions to infinite trees seems to be much harder,
and the major open problem there is the decidability of cost MSO over infinite trees. 
%Note that as explained above, it would imply the decidability of the \textit{Mostowski hierarchy}, a forty years old open problem.

\medskip
\textbf{LoCo conjecture.}
Colcombet and Loeding pointed out that the only missing point to obtain 
the decidability of cost MSO is a finite-memory determinacy result for games with counters.
More precisely, they conjectured that there exists a trade-off between the size of the memory and the bound achieved on the counters~\cite{Colcombet13}.
So far, this conjecture resisted both proofs and refutations, 
and the only non-trivial positive case known is due to Vanden Boom~\cite{VandenBoom11},
which implied the decidability of the weak variant of cost MSO over infinite trees, later generalized to quasi-weak cost MSO in~\cite{BCKPV14}.
Unfortunately, weak cost MSO is strictly weaker than cost MSO, and this leaves open
the question whether cost MSO is decidable.

\medskip
\textbf{Contributions.}
In this paper, we present two contributions:
\begin{itemize}
	\item There is no trade-off, even for finite arenas, which disproves the conjecture,
	\item There is a (non-elementary) trade-off for the special case of thin tree arenas.
\end{itemize}
Our first contribution does not imply the undecidability of cost MSO, 
it rather shows that proving the decidability will involve subtle combinatorial arguments that are yet to be understood.
As a corollary of the second contribution, we obtain the decidability of cost MSO over thin trees.

\medskip
\textbf{Structure of this document.}
%The article is organized as follows. 
The definitions are given in Section~\ref{sec:defs}.
We state the conjecture in Section~\ref{sec:conjecture}.
The Section~\ref{sec:lower_bound} disproves the conjecture.
The Section~\ref{sec:thin_tree} proves that the conjecture holds for the special case of thin tree arenas.

%%%%%%%%%%%%%%%%%%%%%%%%%%%%%%%%%%%%%%%%%
\section{Definitions}
\label{sec:defs}
\medskip \textbf{Arenas.} The games we consider are played by two players, Eve and Adam, over potentially infinite graphs
called arenas\footnote{We refer to~\cite{LNCS2500} for an introduction to games.}.
Formally, an arena~$G$ consists of a directed graph~$(V, E)$
whose vertex set is divided into vertices controlled by Eve ($\VE$)
and vertices controlled by Adam ($\VA$).
A token is initially placed on a given initial vertex $v_0$, and the player
who controls this vertex pushes the token along an edge, reaching a new vertex;
the player who controls this new vertex takes over, and this interaction goes on
forever, describing an infinite path called a \textit{play}.
Finite or infinite plays are paths in the graphs, seen as sequences of edges, typically denoted~$\pi$.
%(To avoid the technical nuisance of dealing with finite plays, we assume that all vertices
%have an outgoing edge.)
In its most general form, a strategy for Eve is a mapping $\sigma : E^* \cdot \VE \to E$,
which given the history played so far and the current vertex picks the next edge.
We say that a play $\pi = e_0 e_1 e_2 \ldots$ is consistent with $\sigma$ if $e_{n+1} = \sigma(e_0 \cdots e_n \cdot v_n)$ 
for every~$n$ with $v_n \in \VE$. 

\medskip \textbf{Winning conditions.} A winning condition for an arena is a set of a plays for Eve, which are called the winning plays for Eve 
(the other plays are winning for Adam). 
A strategy for Eve is winning for a condition, or ensures this condition, if all plays consistent with the strategy belong to the condition. 
For a winning condition $W$, we denote $\WE(W)$ the winning region of Eve, \textit{i.e.} the set of vertices from which Eve has a winning strategy.

Here we will consider the classical parity condition as well as \emph{quantitative} bounding conditions.

The parity condition is specified by a colouring function $\Omega : V \to \set{0,\ldots,d}$,
requiring that the maximum color seen infinitely often is even.
The special case where $\Omega : V \to \set{1,2}$ corresponds to B\"uchi conditions,
denoted $\buc(F)$ where $F = \set{v \in V \mid \Omega(v) = 2}$.
We will also consider the simpler conditions $\safe(F)$ and $\reach(F)$, for $F \subseteq V$:
the first requires to avoid $F$ forever, and the second to visit a vertex from $F$ at least once.

The bounding condition $B$ is actualy a family of winning conditions with an integer parameter $B = \{B(N)\}_{N \in \N}$.
We call it a quantitative condition because it is monotone: if $N < N'$, all the plays in $B(N)$ also belong to $B(N')$.

The counter actions are specified by a function $c : E \to \set{\varepsilon,i,r}^k$,
where $k$ is the number of counters: 
each counter can be incremented ($i$), reset ($r$), or left unchanged ($\varepsilon$). 
The value of a play $\pi$, denoted $\val(\pi)$, is the supremum of the value of all counters along the play. 
It can be infinite if one counter is unbounded. 
The condition $B(N)$ is defined as the set of play whose value is less than $N$.

In this paper, we study the condition $B$-parity, where the winning condition is the intersection of a bounding condition and a parity condition.
The value of a play that satisfies the parity condition is its value according to the bounding condition. 
The value of a play which does not respect the parity condition is $\infty$.
We often consider the special case of $B$-reachability conditions, denoted $B \Until F$. In such cases, we assume that the game stops when it reaches $F$.

Given an initial vertex~$v_0$, the value $\val(v_0)$ is:
$$\inf_{\sigma}\ \sup_{\pi}\ \set{\val(\pi) \mid \pi \textrm{ consistent with } \sigma \textrm{ starting from } v_0}\ .$$

\medskip \textbf{Finite-memory strategies.} A \emph{memory structure} $\M$ for the arena $\G$ consists of a set $M$ of memory states, 
an initial memory state $m_0 \in M$ and an update function $\up: M \times E \to M$.
The update function takes as input the current memory state and the chosen edge to compute the next memory state,
in a deterministic way.
It can be extended to a function $\up: E^* \cdot V \to M$ by defining $\up^*(v) = m_0$ and $\up^* (\pi \cdot (v,v')) = \up(\up^*(\pi \cdot v), (v,v'))$.

Given a memory structure $\M$, a strategy is induced by a next-move function $\sigma: \VE \times M \to E$, 
by $\sigma(\pi \cdot v) = \sigma(v, \up^*(\pi \cdot v))$.
Note that we denote both the next-move function and the induced strategy $\sigma$.
A strategy with memory structure $\M$ has finite memory if $M$ is a finite set.
It is \emph{memoryless}, or \emph{positional} if $M$ is a singleton: it only depends on the current vertex. 
Hence a memoryless strategy can be described as a function $\sigma: \VE \to E$.

An arena $\G$ and a memory structure $\M$ for $\G$ induce the expanded arena 
$\G \times \M$ where the current memory state is stored explicitly along the current vertex:
the vertex set is $V \times M$, the edge set is $E \times \up$, defined by:
$((v,m), (v',m')) \in E'$ if $(v,v') \in E$ and $\up(m,(v,v')) = m'$.
There is a natural one-to-one correspondence between memoryless strategies in $\G \times \M$
and strategies in $\G$ using $\M$ as memory structure.

%%%%%%%%%%%%%%%%%%%%%%%%%%%%%%%%%%%%%%%%%
\section{The conjecture}
\label{sec:conjecture}

In this section, we state the conjecture~\cite{Colcombet13},
and explain how positive cases of this conjecture imply the decidability of cost MSO.

\subsection{Statement of the conjecture}
\begin{center}
\begin{framed}
There exists $\mem : \N^2 \to \N$ and $\alpha : \N^3 \to \N$ such that\\
for all $B$-parity games with $k$ counters, $d+1$ colors and initial vertex $v_0$,\\
there exists a strategy $\sigma$ using $\mem(d,k)$ memory states,
ensuring $B(\alpha(d,k,\val(v_0))) \cap \parity(\Omega)$.
\end{framed}
\end{center}

The function $\alpha$ is called a trade-off function:
if there exists a strategy ensuring $B(N) \cap \parity(\Omega)$,
then there exists a strategy with \textit{small} memory
that ensures $B(\alpha(d,k,N)) \cap \parity(\Omega)$.
So, at the price of increasing the bound from $N$ to $\alpha(d,k,N)$,
one can use a strategy using a small memory structure.

To get a better understanding of this conjecture, we show three simple facts:
\begin{enumerate}
	\item why reducing memory requires to increase the bound,
	\item why the memory bound $\mem$ depends on the number of counters $k$,
	\item why a weaker version of the conjecture holds, where $\mem$ depends on the value,
\end{enumerate}

\begin{figure}[!ht]
\begin{center}
\includegraphics[scale=1]{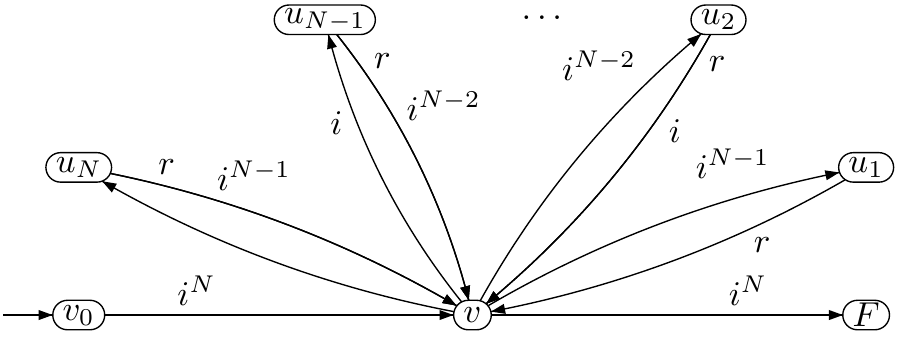}
\caption{\label{fig:trade-off_necessary} A trade-off is necessary.}
\end{center}
\end{figure}
For the first point, we present a simple game, represented in Figure~\ref{fig:trade-off_necessary}.
It involves one counter and the condition $B \Until F$.
Starting from $v_0$, the game moves to $v$ and sets the value of the counter to $N$.
The objective of Eve is to take the edge to the right to $F$.
However, this costs $N$ increments, so if she wants the counter value to remain smaller than $N$
she has to set its value to $0$ before taking this edge.
She has $N$ options: for $\ell \in \set{1,\ldots,N}$, the $\ell$\textsuperscript{th} option consists in going to $u_\ell$, involving the following actions:
\begin{itemize}
	\item first, take $N-\ell$ increments,
	\item then, reset the counter,
	\item then, take $\ell-1$ increments, setting the value to $\ell-1$.
\end{itemize}
It follows that there is a strategy for Eve to ensure $B(N) \Until F$,
which consists in going successively through $u_N$, $u_{N-1}$, and so on, until $u_1$, and finally to $F$.
Hence to ensure that the bound is always smaller than $N$, Eve needs $N+1$ memory states.

However, if we consider the bound $2N$ rather than $N$, then Eve has a very simple strategy, which consists in 
going directly to $F$, using no memory at all.
This is a simple example of a trade-off: to ensure the bound $N$, Eve needs $N+1$ memory states,
but to ensure the worse bound $2N$, she has a positional strategy.

%Note that by repeating this game twice, one obtains a game where Eve needs two memory states 

\smallskip
For the second point, consider the following simple game with $k$ counters (numbered cyclically) 
and only one vertex, controlled by Eve. 
There are $k$ self-loops, each incrementing a counter and resetting the previous one.
Eve has a simple strategy to ensure $B(1)$, which consists in cycling through the loops,
and uses $k$ memory states.
Any strategy using less than $k$ memory states ensures no bound at all,
as one counter would be incremented infinitely many times but never reset.
It follows that the memory bound $\mem$ in the conjecture has to depend on $k$,
and this example shows that $\mem \ge k$ is necessary.

\smallskip
For the third point, we give an easy result that shows the existence of finite memory strategies
whose size depend on the value, even without losing anything on the bound.
Unfortunately, this statement is not strong enough; we discuss in the next subsection the implications of the conjecture.
\begin{lemma}
\label{lem:finite_memory_trivial}
For all $B$-parity games with $k$ counters and initial vertex $v_0$,
there exists a strategy $\sigma$ ensuring $B(\val(v_0)) \cap \parity(\Omega)$ 
with $(\val(v_0)+1)^k$ memory states.
\end{lemma}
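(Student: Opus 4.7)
Let $N = \val(v_0)$. The plan is to reduce the $B$-parity objective to a pure parity objective on an expanded arena by storing the current counter values in memory, and then to invoke positional determinacy of parity games. Concretely, I define a memory structure $\M$ with set of states $M = \set{0, 1, \ldots, N}^k$, initial state $(0, \ldots, 0)$, and update function $\up$ obtained by applying the counter actions of each edge coordinate-wise, capping any increment above $N$ at $N$ itself. On the expanded arena $\G \times \M$, the compound condition $B(N) \cap \parity(\Omega)$ amounts to the parity condition on the $V$-component being satisfied together with no counter increment ever trying to push a coordinate past $N$. This safety-plus-parity objective is again a parity objective: route any offending edge to a losing sink coloured with an odd number greater than $d$, and keep $\Omega$ elsewhere. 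The added sink will be unreachable under any winning strategy, so it does not inflate the effective memory.

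By definition of $\val(v_0) = N$, Eve has a strategy in $\G$---a priori of unbounded memory---ensuring $B(N) \cap \parity(\Omega)$ from $v_0$. Lifting it to $\G \times \M$ yields a winning strategy from $(v_0, m_0)$ for the reshaped parity objective, so $(v_0, m_0)$ lies in Eve's winning region of $\G \times \M$. Positional determinacy of parity games, valid over arbitrary (possibly infinite) arenas, then furnishes a positional winning strategy $\tau$ in $\G \times \M$ from $(v_0, m_0)$. Through the canonical one-to-one correspondence between positional strategies in $\G \times \M$ and strategies in $\G$ using $\M$ as memory (recorded at the end of Section~\ref{sec:defs}), $\tau$ translates into a strategy in $\G$ ensuring $B(\val(v_0)) \cap \parity(\Omega)$ with $|M| = (\val(v_0)+1)^k$ memory states.

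The key observation is that once counter values are stored in memory, the requirement ``values stay $\le N$'' becomes a safety condition, and the compound objective reduces to pure parity. The only external ingredient is therefore positional determinacy of parity games, which I use as a black box; I do not anticipate a serious obstacle, since a play witnessing the finite value $\val(v_0)$ automatically avoids the overflow sink by construction.
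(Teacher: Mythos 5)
Your proof is correct and matches the paper's argument essentially step for step: product the arena with the $(N+1)^k$-state memory tracking counter values, fold the resulting safety condition into the parity objective (the paper adds a sink vertex $\bot$; you color the sink with an odd color above $d$, which is the same device), invoke positional determinacy of parity games, and translate back. No meaningful divergence.
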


\begin{proof}
We consider the memory structure $\M = (\set{0,\ldots,N}^k,0^k,\up)$ which keeps track of the counter values,
where $N = \val(v_0)$. 
We construct the arena $\G \times \M$ and add a new vertex $\bot$ which is reached if a counter reaches
the value $N+1$, according to the memory structure.
The condition $\safe(\bot) \cap \parity(\Omega)$,
which requires never to reach $\bot$ and to satisfy the parity condition, is equivalent to $B(N) \cap \parity(\Omega)$.
Since parity games are positionally determined, there exists a positional strategy ensuring $\safe(\bot) \cap \parity(\Omega)$,
which induces a finite-memory strategy using $\M$ as memory structure ensuring $B(N) \cap \parity(\Omega)$.
\hfill\qed\end{proof}

\subsection{The interplay with cost MSO}
The conjecture stated above has a purpose: if true,
it implies the decidability of cost MSO over infinite trees.
More precisely, the technical difficulty to develop the theory of regular cost functions over infinite trees
is to obtain effective constructions between variants of automata with counters,
and this is what this conjecture is about.

In the qualitative case (without counters), to obtain the decidability of MSO over infinite trees,
known as Rabin's theorem~\cite{Rabin69}, one transforms MSO formulae into equivalent automata.
The complementation construction is the technical cornerstone of this procedure.
The \textit{key} ingredient for this is games, and specifically positional determinacy for parity games. 
Similarly, other classical constructions, to simulate either two-way or alternating automata
by non-deterministic ones, make a crucial use of positional determinacy for parity games.

In the quantitative case now, Colcombet and Loeding~\cite{ColcombetLoeding10} showed that
to extend these constructions, one needs a similar result on parity games with counters,
which is the conjecture we stated above.

\smallskip
So far, there is only one positive instance of this conjecture, which is the special case of B-B\"uchi games
over chronological arenas\footnote{The definition of chronological arenas is given in Section~\ref{sec:thin_tree}.}.

\begin{theorem}[\cite{VandenBoom11}]
\label{thm:buchi}
For all B-B\"uchi games with $k$ counters and initial vertex $v_0$ over chronological arenas,
Eve has a strategy ensuring $B(2 \cdot \val(v_0)) \cap \buc(F)$ with $2 \cdot k!$ memory states.
\end{theorem}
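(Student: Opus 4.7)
The plan is to combine a two-layer duplication for the B\"uchi acceptance with a permutation-indexed signature strategy for the counters, multiplying to give the announced memory bound of $2 \cdot k!$.

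First, I would reduce B-B\"uchi to a family of B-reachability subproblems. Duplicate every vertex into a ``searching'' copy and a ``just left $F$'' copy: Eve plays on the searching copy until reaching $F$, then flips to the post-$F$ copy for exactly one edge and returns. This embedding turns the requirement of infinitely many $F$-visits into repeated reachability goals, at the cost of a one-bit memory flag (the factor $2$). It then suffices to show that on any B-reachability game on a chronological arena with $k$ counters and value $N = \val(v_0)$, Eve has a strategy with $k!$ memory states ensuring $B(2N) \Until F$.

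Second, I would construct the B-reachability strategy using permutation signatures. Fix an optimal, possibly infinite-memory, strategy $\tau$ that ensures bound $N$. At each reachable vertex $v$, record the counter values produced by $\tau$, sort the $k$ counters by those values, and keep only the resulting permutation $\pi \in S_k$ as memory. Eve's next move is: play the edge prescribed by $\tau$ whenever $\tau$'s counter profile is order-consistent with $\pi$, and update $\pi$ whenever $\tau$ forces a change of the relative order of counters (which can be detected locally from the edge action). This gives exactly $k!$ memory states, and the chronological hypothesis is what lets us restrict the signature to the current permutation without having to remember the past — plays branch only forward in time, so a residual strategy from $v$ is determined by $v$ and the current signature.

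Third, I would prove that this finite-memory strategy still achieves bound $2N$. The key invariant is that at every moment the actual counter profile is dominated, component by component, by twice the counter profile under $\tau$: one copy from $\tau$'s own guarantees, and one copy from the slack accumulated between signature updates. Because signatures live in the finite lattice $S_k$, each update strictly reorders two counters, and chronological monotonicity of the arena forbids reusing an equivalent subgame with a previously-seen signature, so no infinite chain of updates can occur on any play before reaching $F$. The main obstacle is precisely this invariant: proving that Adam cannot force Eve into an update pattern that breaches the $2N$ threshold, and that each signature revision is compensated by a genuine reset rather than merely deferring the problem. Without chronological structure the argument collapses, since Adam could loop Eve through equivalent positions carrying inconsistent signatures; with it, one obtains a well-founded counting argument that is the combinatorial core of the proof.
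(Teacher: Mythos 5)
Your high-level blueprint (a one-bit B\"uchi flag times a $k!$-sized permutation signature for the counters) is indeed the shape of Vanden Boom's argument, and the paper does not reprove this theorem but cites it. However, as written the proposal has two genuine gaps that a reader could not fill in without essentially redoing the proof from scratch.

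First, the ``it then suffices to show [a $k!$-memory strategy for $B(2N) \Until F$]'' step is not a valid reduction. Between one visit to $F$ and the next, the counters are \emph{not} reset to zero: their values persist. Bounding each reachability episode by $2N$, starting from the counter configuration at the beginning of that episode, gives no bound at all on the global play --- the counters could drift upward by $2N$ in each episode. The whole difficulty of the theorem is exactly that the strategy Eve uses to reach $F$ again must also make progress on resetting the counters inherited from the previous episode; the B\"uchi condition and the bounding condition interact, and cannot be handled by two independent memory components combined by a product. The correct statement is that Eve needs a single $2 \cdot k!$-state structure in which the two-valued flag and the permutation are coupled, with the invariant holding across episodes, not within each one separately.

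Second, the definition of the finite-memory strategy from $\tau$ is circular. You write that Eve ``plays the edge prescribed by $\tau$ whenever $\tau$'s counter profile is order-consistent with $\pi$,'' but $\tau$ is a function of the full history; the permutation $\pi$ plus the current vertex does not determine what $\tau$ prescribes. The actual work in the known proof is precisely to show that, over a chronological arena, one can extract from a winning strategy a \emph{hierarchical} structure of resets (a well-ordering on which counter is ``due'' to be reset next) such that the residual behaviour is determined, up to the factor $2$ in the bound, by the current vertex, the B\"uchi bit, and that ordering --- and then to prove by an induction on the hierarchy depth that the resulting finite-memory strategy achieves $2N$. Your paragraph gestures at ``a well-founded counting argument'' without supplying the invariant, and without that the factor $2$ (rather than, say, $2^k N$ or worse) is unjustified. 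You should make the invariant explicit: what exactly is bounded by $N$ in the ``slack'' term, and why does a change of permutation not compound it multiplicatively over the course of an infinite play.
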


It leads to the following decidability result.

\begin{corollary}[\cite{VandenBoom11}]
Weak cost MSO over infinite trees is decidable.
\end{corollary}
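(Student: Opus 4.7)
\smallskip
\textbf{Proof plan.} The plan is to follow the standard programme of the theory of regular cost functions: first translate formulae of weak cost MSO into a suitable model of cost automaton on infinite trees, then reduce the attendant equivalence (or limitedness) question to the existence of well-behaved winning strategies in a B-B\"uchi game with counters, and finally invoke Theorem~\ref{thm:buchi} to conclude. The reduction from the logic to the automaton, and from the automaton to the game, is the standard one used for ordinary MSO since Rabin's theorem; the novelty is only in the counter layer.

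First, I would appeal to the translation from weak MSO on infinite trees to weak alternating automata, enriched with counters so as to track the quantitative information carried by the $\mathbb{U}$-style quantifiers of cost MSO. The target model is weak alternating cost automata, as developed by Vanden Boom. A standard product construction reduces the decision problem to a single limitedness question for one such automaton~$\A$.

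Second, I would set up the acceptance game of~$\A$ on a universal tree: Eve resolves the existential transitions while Adam picks the branch of the tree, and the counters of~$\A$ become the counters of the game. Because~$\A$ is \emph{weak}, the parity index collapses after the usual simulations to a B-B\"uchi condition, and the arena inherits the tree structure of the input, hence is chronological in the sense of Section~\ref{sec:thin_tree}. Theorem~\ref{thm:buchi} then applies and provides, whenever Eve wins with value~$N$, a strategy using only $2 \cdot k!$ memory states that ensures the bound $2N$.

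Finally, this finite-memory determinacy gives an effective alternating-to-nondeterministic simulation within the class of cost automata over infinite trees, up to a controlled loss of $2$ in the bound. The limitedness problem for the resulting nondeterministic weak cost automata is already known to be decidable in the work on regular cost functions over infinite trees, and chaining these reductions yields decidability of weak cost MSO.

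The main obstacle will be the second step: precisely matching the arena arising from the automaton-to-game translation with the notion of chronological arena used by Theorem~\ref{thm:buchi}, and verifying that the induced condition is genuinely of B-B\"uchi type (i.e.\ that the counter actions and the B\"uchi colouring are both honestly inherited from the weak structure of~$\A$). The other steps are routine adaptations of the qualitative picture of Rabin's theorem to the counter setting.
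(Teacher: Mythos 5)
Your plan matches the route the paper itself indicates: the corollary is presented as a direct consequence of Theorem~\ref{thm:buchi}, obtained (in~\cite{VandenBoom11}) by translating weak cost MSO to weak alternating cost automata, casting acceptance as a $B$-B\"uchi game on a chronological arena, and using the finite-memory trade-off to eliminate alternation and reduce to a decidable boundedness problem for nondeterministic cost automata. The paper gives no further proof beyond citing~\cite{VandenBoom11}, so your reconstruction of the missing chain of reductions is exactly the intended one.
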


%%%%%%%%%%%%%%%%%%%%%%%%%%%%%%%%%%%%%%%%%
\section{No trade-off over Finite Arenas}
\label{sec:lower_bound}
In this section, we show that the conjecture does not hold, even for finite arenas.

\begin{theorem}
For all $K$, for all $N$, there exists a finite $B$-reachability game $G_{K,N}$ with one counter such that:
\begin{itemize}
	\item there exists a $3^K$ memory states strategy ensuring $B(K(K+3)) \Until F$,
	\item no $K+1$ memory states strategy ensure $B(N) \Until F$.
\end{itemize}
\end{theorem}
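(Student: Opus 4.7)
The plan is to build a single finite arena $G_{K,N}$ exhibiting an exponential memory--bound gap: Eve can play a strategy of memory $3^K$ achieving bound $K(K+3)$, but no strategy of memory $K+1$ ensures any finite bound. The inspiration is the gadget of Figure~\ref{fig:trade-off_necessary}, which already trades one memory state for a factor two on the bound; I would iterate an Adam-controlled version of that gadget $K$ times, so that Eve must memorise a word of length $K$ over a three-letter alphabet in order to play optimally.

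Concretely, $G_{K,N}$ is built as a sequence of $K$ ``choice modules'' followed by a $K$-step ``test phase''. In the $i$-th module Adam picks one of three branches, each costing at most $K+3$ counter increments along the way; at step $i$ of the test phase Eve must play a move that depends on the branch Adam picked in module $i$. If every response is correct the play reaches $F$ for a total cost of at most $K(K+3)$, which gives the upper bound by taking the memory $\set{0,1,2}^K$ and writing Adam's $i$-th branch in the $i$-th coordinate, then reading it back during step $i$ of the test phase. A wrong response at step $i$, however, re-routes the play into an ``amplification sub-gadget'' which Adam can traverse an unbounded number of times, incrementing the counter at each visit without ever resetting it, before the play is eventually allowed to return to the test phase and reach $F$.

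For the lower bound, fix any Eve-strategy $\sigma$ using at most $K+1$ memory states. The $3^K$ possible module-histories are mapped by $\sigma$'s memory-update function into only $K+1$ states, so by pigeonhole two distinct histories $h\ne h'$ land in the same memory state at the end of the module phase. They differ in some module $i$, hence Eve's test-phase response at step $i$ is wrong for at least one of $h, h'$; Adam commits to that history and then pumps the amplification sub-gadget at step $i$ more than $N$ times, driving the counter past $N$. The main obstacle I expect is the fine design of the amplification sub-gadget: it must be entirely controlled by Adam inside Eve's ``blind'' memory class and be traversable arbitrarily often while still eventually permitting the play to reach $F$, since otherwise the condition becomes $\infty$-valued for a reason orthogonal to memory. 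This is the delicate point where a pumping argument must be made to mesh with the $B \Until F$ condition, and it is also where the hypothesis that the arena may depend on $N$ is crucial.
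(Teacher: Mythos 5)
Your ``remember and test'' design runs into a contradiction that is inherent to perfect-information games, not merely a matter of tuning the amplification gadget. For the re-routing at test step $i$ to exist as an arena edge, the vertex reached at test step $i$ must already encode Adam's branch $a_i$ (so that the ``wrong'' edge from it enters the pumping gadget and the ``right'' one does not); but then a positional Eve can recover $a_i$ by inspecting the current vertex, so she needs no memory at all, and your pigeonhole argument proves nothing. Conversely, if the test-step vertices are identified across the $3^K$ module histories (so that Eve genuinely has to remember), the arena cannot tell whether her reply is right, and there is nowhere to hang the gadget. In a finite-arena game the residual outcome from a vertex depends only on that vertex and the continuation, except through quantities that are path-dependent yet not exposed by the graph; the only such quantity here is the counter value itself. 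That, and not an artificial encoding of Adam's choices, is what the paper's lower bound hides from Eve.

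Concretely, $G_{K,N}$ is a nested racing game, not a module-then-test game. Each of the $K$ levels is a line in which Eve from $u_{p,n}$ either increments and advances $d(p,N)$ steps towards $F$, or resets and rewinds $(p+1)\cdot d(p,N)$ steps away from $F$; Adam from $v_n$ either descends to $u_{p,n}$ or skips. Eve must reset before the counter grows too large, yet not so early that she loops; what she must track is her progress modulo the current reset cycle on each level, and the vertex index alone does not reveal this. The $K$ levels run on geometrically separated timelines $d(p,N)=(N+1)^{p-1}$, forcing independent bookkeeping per level. A $3$-state-per-level strategy gives $B(2(p+1))$ at level $p$, hence $B\bigl(\sum_{p=1}^{K} 2(p+1)\bigr)=B(K(K+3))$ with $3^K$ states overall. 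The lower bound is an induction on $K$: after normalizing a $\le K+1$-state strategy for $G_{K+1,N}$, the paper exhibits a sub-arena isomorphic to $G_{K,N}$ on which one memory state is provably never used (because using it would let Adam force a reset that the normalization forbids), yielding a $\le K$-state strategy for $G_{K,N}$ and contradicting the inductive hypothesis. None of this resembles a pigeonhole over histories; the argument is intrinsically about when Eve resets, which is the combinatorial difficulty your sketch leaves unaddressed.
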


We proceed in two steps. The first is an example giving a lower bound of $3$, and the second is a nesting of this first example.

\subsection{A first lower bound of $3$}
We start with a game $\G_1$, which gives a first lower bound of $3$.
It is represented in Figure~\ref{fig:lower_bound_3mem}.
The condition is $B \Until F$.
In this game, Eve is torn between going to the right to reach $F$, which implies incrementing the counter,
and going to the left, to reset the counter.
The actions of Eve from the vertex $u_N$ are:
\begin{itemize}
	\item \textit{increment}, and go one step to the right, to $v_{N-1}$,
	\item \textit{reset}, and go two steps to the left, to $v_{N+2}$.
\end{itemize} 
The actions of Adam from the vertex $v_N$ are:
\begin{itemize}
	\item \textit{play}, and go down to $u_N$,
	\item \textit{skip}, and go to $v_{N-1}$.
\end{itemize}

\begin{figure}[!ht]
\begin{center}
\includegraphics[scale=1]{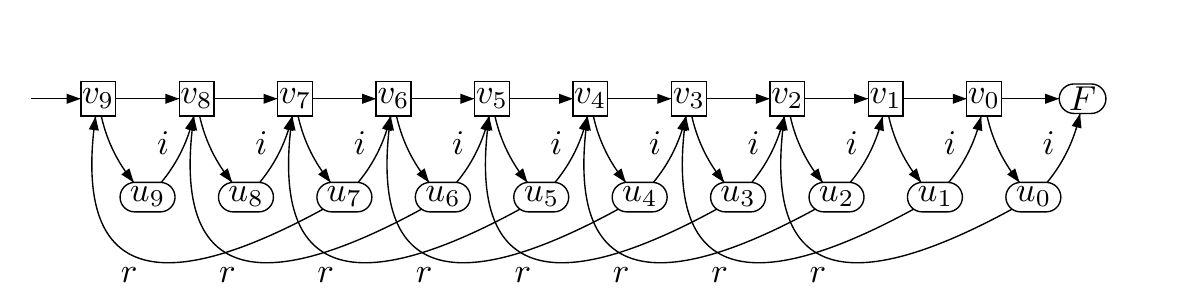}
\caption{\label{fig:lower_bound_3mem} Part of the game $\G_1$, where Eve needs $3$ memory states.}
\end{center}
\end{figure}
Formally:
$$V = \left\{
\begin{array}{l}
\VE = \set{u_n \mid n \in \N} \\
\VA = \set{v_n \mid n \in \N}
\end{array}\right.$$

$$E = \left\{
\begin{array}{llr}
	 & \set{v_{n+1} \xrightarrow{\ \ } v_n \mid n \in \N} 					\\
\cup & \set{v_n \xrightarrow{\ \ } u_n \mid n \in \N} 						\\
\cup & \set{u_{n+1} \xrightarrow{\ i\ } v_n \mid n \in \N} \cup \set{u_0 \xrightarrow{\ i\ } F}  \\
\cup & \set{u_n \xrightarrow{\ r\ } v_{n+2} \mid n \in \N} 					
\end{array}\right.$$

\begin{theorem}
\label{thm:G1}
In $\G_1$:
\begin{itemize}
	\item Eve has a $4$ memory states strategy ensuring $B(3) \Until F$, %so the value is at most $3$,
	\item Eve has a $3$ memory states strategy ensuring $B(4) \Until F$,
	\item For all $N$, no $2$ memory states strategy ensures $B(N) \Until F$ from $v_N$.
\end{itemize}
\end{theorem}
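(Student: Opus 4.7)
My plan is to address the three bullets separately; items~1 and~3 are essentially routine while item~2 is the technical heart of the theorem.

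For item~1 I will exhibit the natural counter-tracking strategy. Take $\M_1 = (\{0,1,2,3\}, 0, \up_1)$, with $\up_1$ adding~$1$ on every increment edge (capped at~$3$), sending its argument to~$0$ on every reset edge, and acting as the identity on Adam's edges; Eve's strategy at $(u_n, m)$ is to increment when $m \in \{0,1,2\}$ and to reset when $m = 3$. A direct induction on the length of the play prefix shows that the memory always equals the current counter value, which gives the bound $B(3)$. For reachability, any four consecutive Eve-Adam pairs consist of three increments followed by one reset and move the current position from some~$v_n$ to~$v_{n-1}$, so $F$ is reached from~$v_N$ in at most $4N$ such pairs, after which the edge $u_0 \to F$ is fired.

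For item~2 I will construct a three-state memory $\M_2 = (\{s_0, s_1, s_2\}, s_0, \up_2)$ together with a strategy in the product game $\G_1 \times \M_2$. The intended meaning of the three memory states is ``just reset'', ``one increment since the last reset'' and ``two or more increments since the last reset''. The decisive new ingredient over item~1 is that Eve's decision in state~$s_2$ is allowed to depend on the current vertex~$u_n$: at carefully chosen positions she performs one additional increment (pushing the counter to $3$ or $4$) rather than resetting, so that every inter-reset block contains three or four increments and the position strictly decreases. The bound $B(4)$ and the reachability of $F$ are then verified by direct trace analysis in the product game. Producing the explicit update function~$\up_2$ and checking that both invariants are simultaneously maintained is where I expect the main technical difficulty to lie.

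For item~3 I will argue by contradiction. Suppose a $2$-memory strategy $\sigma$ with memory $\M = (\{m_0, m_1\}, m_0, \up)$ ensures $B(N) \Until F$ from~$v_N$, keeping in mind that $B(N)$ requires the strict inequality $\val < N$. Against Adam's counter-strategy that plays at each~$v_n$ with $n \ge 2$, skips at~$v_1$ and plays at~$v_0$, any play in which Eve never resets produces exactly~$N$ consecutive increments, giving $\val = N$ and violating $B(N)$. So $\sigma$ must prescribe a reset at some $(u_k, m)$ along this play, which in turn pins down specific behaviour of~$\up$; a short case analysis on the possibilities for the update function on Eve's edges then shows either that Adam can modify his response so as to avoid every reset trigger, yielding $\val = N$ again, or that the resulting trajectory falls into a cycle confined away from~$u_0$, so that $F$ is never reached. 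In every case the strategy fails $B(N) \Until F$ from~$v_N$, and the crucial ingredient in the separation between the $2$-memory and $3$-memory regimes is the strict inequality in the definition of $B(N)$.
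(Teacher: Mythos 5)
Your item~1 is essentially the paper's own construction (a memory that mirrors the counter value and resets at the top), and the reachability argument is fine. But the remaining two items contain genuine gaps, and there is also an internal inconsistency in how you treat the semantics of $B(N)$.

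First, the inconsistency. In item~1 your strategy lets the counter reach the value~$3$ (three increments from~$0$ before the reset fires), so it only ensures $B(3)$ under the interpretation $\val \le 3$. Yet in item~3 you assert that ``$B(N)$ requires the strict inequality $\val < N$'' and build Adam's strategy around producing \emph{exactly} $N$ increments. The paper's formalization via Lemma~\ref{lem:finite_memory_trivial} (the $\bot$ vertex is reached only if a counter attains $N+1$) settles the point: $B(N)$ is $\val \le N$. Consequently, Adam's counter-strategy as you specify it (play at $v_n$ for $n\ge 2$, skip at $v_1$, play at $v_0$) yields $\val = N$ when Eve never resets, which satisfies $B(N)$ and reaches $F$---no contradiction. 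The right Adam strategy to get a mandatory reset is to play at every $v_n$, which forces $N+1$ increments; but even after that repair, the ``short case analysis on the update function'' you announce is not carried out, and this is where the substance lies. The paper's argument requires a normalization step (at every $u_n$ some memory state increments; at all but $N$ of them some memory state resets; no play returns to $v_{2N}$) and then an \emph{adaptive} Adam strategy that skips precisely when both memory states increment, yielding the dichotomy ``Eve resets forever and returns to the start'' versus ``Eve reuses a memory state after the reset and is trapped in a loop''. None of that structure appears in your sketch, and without it the conclusion does not follow.

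For item~2, you correctly identify that some position-dependent information is needed, but you do not supply the idea that actually makes three states suffice. The paper colours every second vertex (a fixed $2$-periodic partition of the arena) and lets the memory advance $i \to j \to r$ \emph{only} on coloured vertices, so that the automaton is throttled to move half as fast as the play; the strategy then increments in $i,j$ and resets in $r$, and the invariant is that each reset lands strictly to the right of the previous reset point. Your description (``in state $s_2$ Eve may perform one more increment at carefully chosen positions'') gestures at the same flexibility but leaves the crucial update function $\up_2$ undefined and the two invariants unverified, which you acknowledge is the technical difficulty. As it stands, item~2 is a statement of intent rather than a proof, and the colouring trick is the missing ingredient.
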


The first item follows from Lemma~\ref{lem:finite_memory_trivial}. However, to illustrate the properties of the game $\G_1$ we will provide a concrete strategy with $4$ memory states that ensures $B(3) \Until F$.
The memory states are $i_1,i_2,i_3$ and $r$, linearly ordered by $i_1 < i_2 < i_3 < r$.
With the memory states $i_1,i_2$ and $i_3$, the strategy chooses to increment,
and updates its memory state to the next memory state.
With the memory state $r$, the strategy chooses to reset,
and updates its memory state to $i_1$.
This strategy satisfies a simple invariant: it always resets to the right of the previous reset, if any.
%It indeed ensures $B(3) \Until F$.

\begin{figure}[!ht]
\begin{center}
\includegraphics[scale=.8]{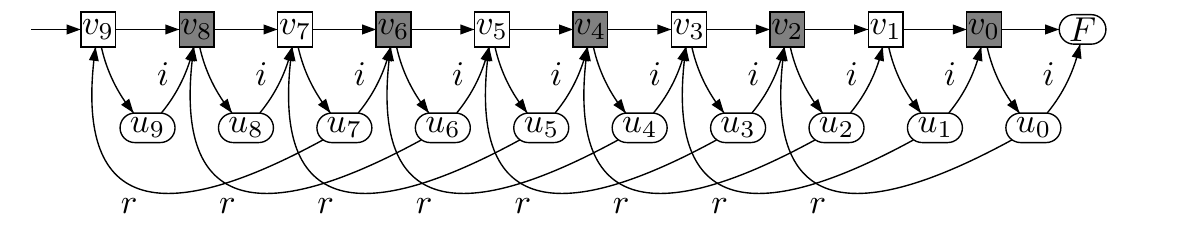}
\caption{\label{fig:3mem} Illustration of the $3$ memory states strategy in $\G_1$.}
\end{center}
\end{figure}
\medskip
We show how to save one memory state, at the price of increasing the bound by one:
we construct a $3$ memory states strategy ensuring $B(4) \Until F$.
The idea, as represented in Figure~\ref{fig:3mem}, is to color every second vertex and to use this information
to track progress.
The $3$ memory states are called $i$, $j$ and $r$. The update is as follows: the memory state is unchanged in uncoloured (white) states,
and switches from $i$ and $j$ and from $j$ to $r$ on gray states.
The strategy is as follows: in the two memory states $i$ and $j$, Eve chooses to increment,
and in $r$ she chooses to reset.
As for the previous strategy, this strategy ensures that it always resets to the right of the previous reset, if any.

\medskip
We now show that $2$ memory states is not enough. 
Assume towards contradiction that there exists a $2$ memory states strategy ensuring $B(N) \Until F$ from $v_{2N}$, for some $N$,
using the memory structure $\M = (M, \up, m_0)$.

We first argue that without loss of generality we can assume that the strategy $\sigma$ is normalized,
\textit{i.e.} satisfies the following three properties:
\begin{enumerate}
	\item for all $n \le 2N$, there is at least one memory state that chooses increment from $u_n$,
	\item for all $n \le 2N$ but at most $N$ of them, there is at least one memory state that chooses reset from $u_n$,
	\item no play from $(v_{2N},m_0)$ consistent with $\sigma$ comes back to $v_{2N}$.
\end{enumerate}
Indeed:
\begin{enumerate}
	\item Assume towards contradiction that this is not the case, then there exists $n$ such that Eve resets from $u_n$ with both memory states;
	Adam can loop around this $u_n$, contradicting that $\sigma$ ensures to reach $F$.
	\item Assume towards contradiction that there are at least $N+1$ vertices $u_n$ from which Eve increments from $u_n$ with both memory states;
	Adam can force $N+1$ increments without a reset, contradicting that $\sigma$ ensures $B(N)$.
	\item For $m$ and $m'$ two memory states, we say that $m < m'$ if there exists a play from $(v_{2N},m')$ consistent with $\sigma$ which reaches $(v_{2N},m)$.
	Since $\sigma$ ensures to reach $F$, the graph induced by $<$ is acyclic. 
	We can take as initial memory state from $v_{2N}$ the smallest memory state which is smaller or equal to $m_0$.
\end{enumerate}

We fix the strategy of Adam which skips if, and only if, both memory states of $\sigma$ choose to increment.
Consider the play from $v_{2N}$ consistent with $\sigma$ and this strategy of Adam.
This means that for all vertices $u_n$ that are reached, there is one memory state that resets, and one that increments.
Since $\sigma$ ensures $B(N)$ and there are at most $N$ positions skipped, at some point Eve chooses to reset.
From there two scenarios are possible:
\begin{itemize}
	\item Either Eve keeps resetting until she reaches $v_{2N}$, contradicting that $\sigma$ is normalized,
	\item Or she starts incrementing again, which means that she uses the same memory state than she did before the reset,
	implying that there is a loop, contradicting that $\sigma$ ensures to reach $F$.
\end{itemize}

\subsection{General lower bound}
We now push the example above further.

A first approach is to modify $\G_1$ by increasing the length of the resets, going $\ell$ steps to the left rather than only $2$.
However, this does not give a better lower bound: there exists a $3$ memory states strategy in this modified game
that ensures twice the value, following the same ideas as presented above.

\begin{figure}[!ht]
\begin{center}
\includegraphics[scale=1]{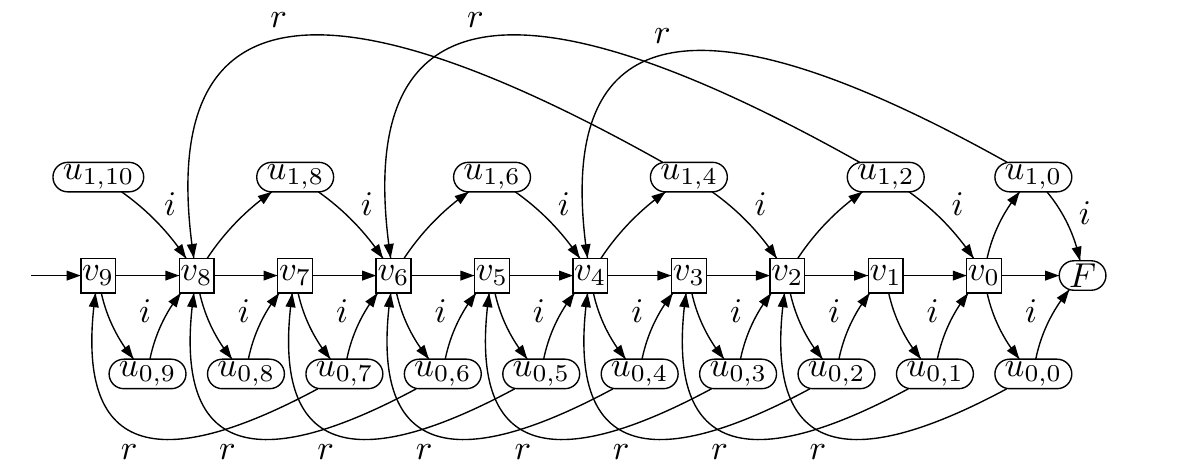}
\caption{\label{fig:G_22} The game with two levels.}
\end{center}
\end{figure}

We construct $\G_{K,N}$, a nesting of the game $\G_1$ with $K$ levels.
Unlike $\G_1$, it is finite, as we only keep a long enough ``suffix''.
In figure~\ref{fig:G_22}, we represented the interaction between two levels.
Roughly speaking, the two levels are independent, so we play both games at the same time.
Those two games use different timeline.
For instance, in Figure~\ref{fig:G_22}, the bottom level is based on $(+1,-2)$ (an increment goes one step to the right,
a reset two steps to the left), and the top level is based on $(+2,-4)$.
This difference in timeline ensures that a strategy for Eve needs to take care somehow independently of each level,
ensuring that the number of memory states depends on the number of levels.

To give the formal definition of $\G_{K,N}$, we need two functions,
$d(K,N) = (N+1)^{K-1}$ and 
$$n(K+1,N) = 
\begin{cases}
2N & \textrm{ if } K = 0, \\
(N+1)^{K+1} + (N+1) \cdot n(K,N) & \textrm{ otherwise}. \\
\end{cases}$$

We now define $\G_{K,N}$.
$$V = \left\{
\begin{array}{l}
\VE = \set{u_{p,n} \mid p \in \set{1,\ldots,K}, n \le n(K,N)} \\
\VA = \set{v_n \mid n \le n(K,N)}
\end{array}\right.$$

$$E = \left\{
\begin{array}{ll}
	 & \set{ v_{n+1} \xrightarrow{\ } v_n \mid n	} \\
	% \mid n + 1 \le n(K,N)
\cup & \set{v_n \xrightarrow{\ } u_{p,n} \mid p,n } \\
	% \mid p \le K, n \le n(K,N)
\cup & \set{u_{p,n + d(p,N)} \xrightarrow{\ i\ } v_n \mid p,n } \\
	% \mid p \le K, n + d(p,N) \le n(K,N)
\cup & \set{u_{p,0} \xrightarrow{\ i\ } F \mid p } \\
	% \mid p \le K
\cup & \set{u_{p,n} \xrightarrow{\ r\ } v_{n + (p+1) \cdot d(p,N)} \mid p, n }
	% \mid p \le K, n + (p+1) \cdot d(p,N) \le n(K,N)
\end{array}\right.$$

Observe that $\G_{1,N}$ is the ``suffix'' of length $n(1,N)$ of $\G_1$, for all $N$.

\begin{theorem}
\label{thm:GKN}
In $\G_{K,N}$:
\begin{itemize}
	\item Eve has a $3^K$ memory states strategy ensuring $B(K(K+3)) \Until F$,
	\item No $K+1$ memory states strategy ensures $B(N) \Until F$ from $v_{n(K,N)}$.
\end{itemize}
\end{theorem}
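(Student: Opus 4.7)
My plan is to prove the two statements by lifting the analysis of Theorem~\ref{thm:G1} from one level to $K$ levels, handling the upper and lower bounds separately.

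\textbf{Upper bound.} For the $3^K$-memory strategy I would take a product of $K$ independent three-state controllers, one per level $p \in \set{1,\dots,K}$. Each controller carries a state in $\set{i, j, r}$, increments in states $i$ and $j$, resets in state $r$, and advances its internal state as in Figure~\ref{fig:3mem}, but driven only by traversals of level-$p$ vertices: when Adam plays into $u_{p,n}$, only the $p$-th controller reads the move and updates. The resulting strategy enforces, at every level $p$, the same invariant as in $\G_1$, namely that successive level-$p$ resets land strictly to the right of one another, so eventually Eve reaches $F$. For the counter bound I would analyse the number of increments between two successive resets, level by level, exploiting that the step lengths $d(p,N) = (N+1)^{p-1}$ grow geometrically: a single move at level $p$ dwarfs any excursion at lower levels, so each level contributes only a constant (essentially the $4$ of $\G_1$ plus $O(K)$ overhead to cross smaller levels), and summing over the $K$ levels yields the bound $K(K+3)$.

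\textbf{Lower bound.} For the negative direction I would assume for contradiction that Eve has a strategy $\sigma$ with at most $K+1$ memory states ensuring $B(N) \Until F$ from $v_{n(K,N)}$, and normalize $\sigma$ exactly as in Theorem~\ref{thm:G1}: no consistent play returns to $v_{n(K,N)}$; at each $u_{p,n}$ at least one memory state increments; and, away from a controlled exception set at each level, at least one memory state resets. I would then construct Adam's strategy by induction on the level, maintaining the invariant that after processing level $p$ at least $p$ distinct memory states have been committed to behaviours incompatible with further level-$\le p$ resets. After all $K$ levels have been processed, at most one memory state remains free, and applying the two-memory argument of Theorem~\ref{thm:G1} at that remaining level produces either a loop (contradicting normalization property 3) or at least $N$ consecutive increments (contradicting $B(N)$).

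\textbf{Main obstacle.} The delicate point is propagating this inductive invariant across levels given the strong geometric coupling between them: a level-$p$ reset jumps by $(p+1)\,d(p,N)$ and so undoes a large amount of progress performed at lower levels, and Adam must reinvest each such jump in higher levels without spending a fresh memory state. The parameters $d(p,N) = (N+1)^{p-1}$ and the recursion defining $n(K,N)$ are calibrated precisely for this: at each level $p$, Adam can afford $N+1$ iterations of the lower-level sub-game, which is exactly the factor $(N+1)$ appearing in the recurrence $n(K+1,N) = (N+1)^{K+1} + (N+1) \cdot n(K,N)$. Making the separation-of-scales argument precise, presumably via a potential function measuring, at each position, the number of memory states still ``available'' at each level, is where I expect the bulk of the technical work to lie.
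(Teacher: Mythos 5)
Your high-level plan (product controller for the upper bound; induction "spending'' memory states on levels for the lower bound) is the right shape, but in both halves you diverge from the paper's proof at exactly the point you flag as delicate, and in the upper bound the divergence is a genuine error.

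\textbf{Upper bound.} The paper's $3^K$-state strategy is \emph{not} a product of fully independent controllers. It tracks progress at level $p$ by colouring every $(p+1)\cdot d(p,N)$ positions in the shared timeline, and, crucially, when a reset is taken at \emph{any} level, \emph{all} $K$ controllers are reset to their initial state $i$. This synchronisation is what propagates the invariant ``each reset lands strictly to the right of the previous reset'' across levels, and it is what lets the paper charge only $2(p+1)$ to level $p$, summing to $K(K+3)$. Your proposal has each controller ``driven only by traversals of level-$p$ vertices'' with ``only the $p$-th controller reads the move and updates.'' Without the global reset-synchronisation, a controller at level $q$ may still be deep in its cycle (near state $r$) after a level-$p$ reset has pulled the token far to the left; Adam can then trigger a level-$q$ reset that lands to the left of the previous reset, breaking the invariant and unbounding the counter. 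Your ``dwarfing'' heuristic for the bound is also not how the paper accounts: the paper does not appeal to separation of scales for the upper bound at all, only the synchronised invariant and a direct sum.

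\textbf{Lower bound.} Here your plan is not wrong so much as not the paper's, and you explicitly leave the hard step open. The paper does \emph{not} run an explicit potential function over committed memory states. Instead it performs a clean reduction: after normalising (no return to the initial vertex), it shows by a pigeonhole/alternation argument that there must exist a position $v_n$ and a memory state $m$ from which, along plays avoiding the topmost level, $\sigma$ reaches $u_{K+1,\,n-n(K,N)}$ and \emph{resets} there. The calibration $(K+1)\cdot d(K+1,N) \geq n(K,N)$ ensures that such a reset would overshoot to the left of $v_n$, so $\sigma$ restricted to the first $n(K,N)$ moves from $v_n$ on the lower $K$ levels can never visit memory state $m$ (doing so would let Adam force that overshooting reset, contradicting normalisation). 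This restricted strategy plays a game identical to $\G_{K,N}$ with one fewer available memory state, and the induction hypothesis finishes. This reduction is precisely what lets the paper sidestep the ``strong geometric coupling between levels'' you identify as the main obstacle: it only needs one committed state per inductive step, and it gets it by looking at the topmost level alone, not by tracking a per-level ledger. If you want to pursue your potential-function variant you would have to prove an invariant at each level simultaneously, which is substantially harder than the paper's one-level-at-a-time reduction.
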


We first construct a strategy with $3^K$ memory states ensuring $B(K(K+3)) \Until F$.
To this end, we construct for the $p$\textsuperscript{th} level a strategy with $3$ memory states ensuring $B(2(p+1)) \Until F$,
using the same ideas as for $\G_1$, colouring every $(p+1) \cdot d(p,N)$ vertices.
Now we construct the general strategy by playing independently in each copy,
except that when a reset is taken, all memory structures update to the (initial) memory state $i$.
This way, it ensures that it always resets to the right of the previous reset, if any.
It uses $3^K$ memory states, and ensures $B(\sum_{p = 1}^K 2 \cdot (p+1)) \Until F$, \textit{i.e.} $B(K(K+3)) \Until F$.

\medskip
We now show that $K+1$ memory states is not enough.
We proceed by induction on $K$. 
The case $K = 1$ follows from Theorem~\ref{thm:G1}.

Consider a strategy ensuring $B(N) \Until F$ from $v_{n(K+1,N)}$ in $\G_{K+1,N}$, for some $N$,
using the memory structure $\M = (M, \up, m_0)$.
We will prove that it has at least $K+2$ memory states.
To this end, we will show that it implies a strategy ensuring $B(N) \Until F$ in $\G_{K,N}$, which uses one less memory state.
The induction hypothesis will conclude.

We first argue that without loss of generality we can assume that no play from $(v_{n(K+1,N)},m_0)$ consistent with $\sigma$ comes back to $v_{n(K+1,N)}$.
(The proof is the same as for $\G_1$.)
For $m$ and $m'$ two memory states, we say that $m < m'$ if there exists a play from $(v_{n(K+1,N)},m')$ consistent with $\sigma$ which reaches $(v_{n(K+1,N)},m)$.
Since $\sigma$ ensures to reach $F$, the graph induced by $<$ is acyclic. 
We can take as initial memory state from $v_{n(K+1,N)}$ the smallest memory state which is smaller or equal to $m_0$.

We now argue that there exists $n \le n(K+1,N)$ % = (N+1) \cdot (n(K,N) + d(K+1,N))$
and a play from $(v_n,m)$ (for some memory state $m \in M$) to $u_{K+1,n - n(K,N)}$ consistent with $\sigma$,
which does not use the topmost level (level $K+1$), and such that from there $\sigma$ chooses to reset.

Assume towards contradiction that this is not the case.
Consider the following strategy of Adam, from $v_{n(K+1,N)}$. 
It alternates ($N+1$ times) between skipping for $n(K,N)$ steps and going to the topmost level.
By assumption, $\sigma$ chooses to increment.
This implies $N+1$ increments without resets, contradicting that $\sigma$ ensures $B(N)$.

\smallskip
Let $v_n$ given by the above property.
For every $n'$ such that $n - n(K,N) \le n' \le n$ and $p \le K$, 
for every vertex $v_{n'}$ and $u_{p,n'}$, 
there exists a memory state that leads to $u_{K+1,n - n(K,N)}$
such that from there $\sigma$ chooses to reset.
Up to renaming, we can assume that it is always the same memory state, denoted $m$.

Consider the game obtained by restricting to the first $n(K,N)$ moves from $v_n$ and excluding the topmost level;
it is equal to the game $G_{K,N}$ from $v_{n(K,N)}$ for the condition $B(N) \Until v_{n - n(K,N)}$.
Observe now that the strategy $\sigma$ restricted to this game ensures $B(N) \Until v_{n - n(K,N)}$.
Furthermore, it does not make use of the memory state $m$;
indeed, the functions have been chosen such that $(K+1) \cdot d(K+1,N) \ge n(K,N)$,
so resetting in $u_{K+1,n - n(K,N)}$ leads to the left of $v_n$.
Using the memory state $m$ at any point would allow Adam to force to reach $u_{K+1,n - n(K,N)}$
and reset from there, which would contradict the fact that $\sigma$ is normalized.
This concludes.

%%%%%%%%%%%%%%%%%%%%%%%%%%%%%%%%%%%%%%%%%
\section{Existence of a trade-off for thin tree arenas}
\label{sec:thin_tree}

In this section, we prove that the conjecture holds 
for the special case of thin tree arenas\footnote{The definitions of word and thin tree arenas are given in Subsection~\ref{subsec:defs_arenas}.}.

\begin{theorem}
\label{thm:thin_tree}
There exists two functions $\mem : \N^2 \to \N$ and $\alpha : \N^4 \to \N$ such that
for all $B$-parity games with $k$ counters and $d+1$ colors over thin tree arenas of width $W$ with initial vertex $v_0$,
Eve has a strategy to ensure $B(\val(v_0)^k \cdot \alpha(d,k,W,\val(v_0))) \cap \parity(\Omega)$,
with $W \cdot 3^k \cdot k! \cdot \mem(d,k)$ memory states.
\end{theorem}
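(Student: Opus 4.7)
The plan is to proceed by induction on the width $W$ of the thin tree arena, with an outer reduction from $B$-parity to $B$-B\"uchi and an inner application of Theorem~\ref{thm:buchi}. First, I would reduce the $B$-parity condition with $d+1$ colors to a $B$-B\"uchi condition by expanding the arena with a memory structure of size $\mem(d,k)$. The natural candidate is a cost-aware variant of the latest-appearance-record / progress-measure construction, where the memory tracks the profile of recently seen colors so that a B\"uchi condition on the product arena faithfully encodes the parity condition of the original game. Since taking a synchronous product with a finite memory preserves the thin tree structure (and in particular its width), from now on we may assume the game is $B$-B\"uchi on a thin tree arena of width $W$.

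Second, the base case $W = 1$ corresponds to arenas that are chronological in the sense of Theorem~\ref{thm:buchi}: a single spine with no side branching. Vanden Boom's theorem then directly supplies a strategy with $2\cdot k!$ memory states ensuring $B(2\cdot \val(v_0)) \cap \buc(F)$, which is stronger than what we need at this level. For the inductive step ($W > 1$), I would exploit the Cantor--Bendixson decomposition of the thin tree arena: there is a distinguished spine of CB rank $W$ whose removal leaves connected subgames, each of which is a thin tree arena of width strictly less than $W$. Eve's composite strategy plays along the spine using the chronological B\"uchi strategy of Theorem~\ref{thm:buchi} (with an extra factor $3^k$ absorbing the combinatorial structure of counter updates along the spine, analogous to the bookkeeping that appeared in the lower-bound analysis of Section~\ref{sec:lower_bound}), and whenever the play enters an off-spine subtree, it switches to the strategy provided by the induction hypothesis on that subtree of width at most $W-1$. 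A memory factor of $W$ records the current level of the CB decomposition. Combining these, the total memory is $W \cdot 3^k \cdot k! \cdot \mem(d,k)$, and the bound degrades by a fixed multiplicative factor at each recursive level and each counter, accumulating to $\val(v_0)^k \cdot \alpha(d,k,W,\val(v_0))$ for an appropriate trade-off function $\alpha$.

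The main obstacle is the coordination between the spine strategy and the off-spine sub-strategies. Two issues need to be handled carefully. First, one must prove that the values of the off-spine subgames are uniformly bounded by $\val(v_0)$ of the root game, so that the induction hypothesis applies with the correct parameter rather than with a value that could itself grow with the recursion; this typically follows from a reset-of-counters argument at spine transitions combined with an ``optimal substructure'' lemma for subgames of a $B$-parity game. Second, one must ensure that Adam cannot exploit repeated excursions into the same off-spine subtree to accumulate counter increments beyond the declared bound: this requires arguing, using the B\"uchi acceptance on the spine, that between two visits to the accepting set the play either stays confined to a bounded depth or triggers a controlled reset, so that each off-spine subtree is visited only a bounded number of times per spine segment. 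Formalizing this coordination, together with the cost-aware parity-to-B\"uchi reduction that underlies the factor $\mem(d,k)$, is where the technical substance of the proof lies.
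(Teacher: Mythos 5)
Your plan has a correct intuition in one place — the use of the Cantor--Bendixson / rank decomposition of the thin tree to reduce to a ``linear'' (word-like) structure — but the two load-bearing steps of the plan do not work, and each hides the actual difficulty of the theorem.

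First, there is no cost-aware ``LAR / progress-measure'' reduction from $B$-parity to $B$-B\"uchi with bounded memory $\mem(d,k)$ and bounded cost blow-up. This is not a routine preprocessing step; it is the heart of the theorem. The paper proves the reduction for \emph{word} arenas (Theorem~\ref{thm:word}) by induction on the number of colors, and the crucial even-color case cannot be encoded in a memory structure alone: removing color $0$ introduces a \emph{new counter} that tracks alternations between a ``Safe'' and an ``Attractor'' mode. The key combinatorial fact (Lemma~\ref{lem:word_collapse}) is that in a word arena of width $W$ the number of such alternations is bounded by $W$, which is why the width shows up in the cost bound. This is reflected in the recursive definition $\mem(d,k) = 4\cdot\mem(d-2,k+1)$: the counter count \emph{increases} as colors decrease. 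A LAR-style memory expansion cannot reproduce this, because a memory structure only records a bounded quantity along the play and cannot absorb the unbounded (but eventually finite) number of mode changes that the added counter is there to control.

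Second, the induction on width via ``remove a spine, the residual subgames have width $< W$'' is false. The width of a (thin) tree arena is the maximum number of game vertices mapped by $R$ to a single tree node; it has nothing to do with the Cantor--Bendixson rank of the underlying tree and is not reduced by deleting a branch. A thin tree of CB rank $3$ and width $7$ has off-spine subtrees that are still width $7$. The paper does use Fact~\ref{fact:thin_tree} to decompose the thin tree into ranks, but each rank is either a single node or an infinite branch, and the arena restricted to an infinite branch is a \emph{word arena of the same width $W$}, handled by Theorem~\ref{thm:word}. The global coordination across ranks is then done through a well-founded ``summary'' game $\G'$ over the quotient by ranks, with Lemma~\ref{lem:well_founded} providing the $k!$ memory and the factor $\val(v_0)^k$ in the cost, and Fact~\ref{fact:summary} gluing local and global values. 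Relatedly, the $W$ factor in the memory bound is not a CB-level indicator: it records which of the $\le W$ vertices of the entry node the play used when entering the current rank, information needed to select the right word-arena strategy. As written, your plan replaces the two genuinely hard arguments of the proof (color-induction with a counter trade, and the rank/summary decomposition) with steps that do not hold, so it cannot be completed in this form.
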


The functions $\alpha$ and $\mem$ are defined as follows.
$$\alpha(d,k,W,N) = 
\begin{cases}
2N & \textrm{ if } d = 1, \\
\alpha(d-2,k+1,6W,K \cdot (N+1)^k) & \textrm{ otherwise}. \\
\end{cases}$$
$$\mem(d,k) = 
\begin{cases}
2 \cdot k! & \textrm{ if } d = 1, \\
4 \cdot \mem(d-2,k+1) & \textrm{ otherwise}. \\
\end{cases}$$

As an intermediate result, we will prove that the conjecture holds for the special case of word arenas.

\begin{theorem}
\label{thm:word}
There exists two functions $\mem : \N^2 \to \N$ and $\alpha : \N^4 \to \N$ such that
for all $B$-parity games over word arenas of width $W$ with initial vertex $v_0$,
Eve has a strategy to ensure $B(\alpha(d,W,k,\val(v_0))) \cap \parity(\Omega)$,
with $\mem(d,k)$ memory states.
\end{theorem}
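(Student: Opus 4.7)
I proceed by induction on $d$, matching the recurrences defining $\mem$ and $\alpha$. For the base case $d=1$ the condition is $B$-B\"uchi, and since word arenas are a special case of chronological arenas, Theorem~\ref{thm:buchi} applies directly and yields a strategy with $2\cdot k!$ memory states ensuring $B(2\cdot\val(v_0))\cap\buc(F)$, which matches $\mem(1,k)$ and $\alpha(1,k,W,N)=2N$.

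For the inductive step ($d\geq 2$) I eliminate the top two colors $d$ and $d-1$ using the classical encoding of a priority by a counter. Assume WLOG that $d$ is even. Construct a modified arena in which every vertex previously coloured $d-1$ is replaced by a short gadget whose outgoing edge increments a fresh $(k+1)$\textsuperscript{th} counter, every vertex coloured $d$ is replaced by a gadget whose outgoing edge resets it, and these vertices are relabelled with colour $d-2$. The resulting game has $d-1$ colours, $k+1$ counters, and boundedness of the fresh counter together with parity on the remaining colours is equivalent to the original $B$-parity condition: if colour $d-1$ is visited infinitely often, boundedness forces colour $d$ to be visited infinitely often as well, so the maximum infinite colour is the even $d$; otherwise only colours $\leq d-2$ occur infinitely often and the lower parity condition is preserved. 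The gadget construction keeps the game inside the class of word arenas and multiplies the width by at most~$6$, accounting for the extra bookkeeping needed so that the translation of finite-memory strategies in both directions is exact. Using the naive counter-tracking witness from Lemma~\ref{lem:finite_memory_trivial} I then bound the initial value in the modified game by $K\cdot(\val(v_0)+1)^k$.

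The induction hypothesis applied to the modified game produces a strategy with $\mem(d-2,k+1)$ memory states ensuring $B(\alpha(d-2,k+1,6W,K\cdot(\val(v_0)+1)^k))\cap\parity$; lifting it back to the original arena costs at most a constant factor of~$4$ on the memory (encoding which top-colour phase the simulation is currently in), giving $4\cdot\mem(d-2,k+1)=\mem(d,k)$ memory states and ensuring $B(\alpha(d,k,W,\val(v_0)))\cap\parity(\Omega)$ as required. The main obstacle is the careful design of the arena transformation: it must simultaneously stay within word arenas of width at most~$6W$, realise the polynomial value inflation $K\cdot(N+1)^k$ via an explicit counter-tracking witness, and translate finite-memory strategies exactly between the two games. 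These three requirements together pin down the precise constants $6$, $4$, and the polynomial $K(N+1)^k$ appearing in the recurrences, and any loosening propagates catastrophically through the non-elementary recursion.
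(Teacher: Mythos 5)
Your proposal diverges from the paper in a way that is not a mere variation but a genuine gap: you remove the \emph{top} two priorities $d$ and $d-1$ by encoding them as a fresh counter (increment on $d-1$, reset on $d$), whereas the paper removes the \emph{bottom} two priorities $0$ and $1$. These are not symmetric, and the top-down version breaks at the very first step. For the reduction to be sound you must show that whenever Eve wins the original game with some bound, she can also bound the fresh counter in the modified game. With your counter (increment on $d-1$, reset on $d$) this fails: consider a word arena with no original counters in which priority $d$ (even) occurs at positions $1, 4, 9, 16, \ldots$ and priority $d-1$ everywhere else. Eve wins the original parity game trivially, but the fresh counter you add attains arbitrarily large values between consecutive resets, so she cannot ensure $B(N')$ for any $N'$ in the modified game. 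Boundedness of your counter is a genuinely stronger requirement than ``if $d-1$ occurs infinitely often then so does $d$'': it demands a \emph{uniform} bound on the $d-1$-blocks, which a winning parity strategy need not provide.

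The paper's bottom-up reduction avoids this precisely because the mode switching (``Safe''/``Attractor'') that drives the new counter is placed under Eve's control, and the heart of the proof is Lemma~\ref{lem:word_collapse}: in a word arena of width $W$, the number of alternations Eve needs is bounded by a function of $W$ and $\val(v_0)$, via the ranks $X_0 \subseteq X_1 \subseteq \cdots \subseteq X_{2W}$ on the strategy graph $\G_\sigma$. Your proposal never invokes this lemma nor anything playing its role, and the sentence appealing to Lemma~\ref{lem:finite_memory_trivial} to bound the modified game's value by $K\cdot(\val(v_0)+1)^k$ is unjustified: that lemma only bounds the \emph{original} counters, and says nothing about a counter that did not exist in $\G$. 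Likewise the removal of the odd colour $1$ in the paper goes through the slice construction $(S_k)_k$ and a safety constraint, which again has no analogue in your sketch. The constants $6$, $4$ and the polynomial $K(N+1)^k$ that you announce are the right shape, but you have only asserted them; the paper derives them from the explicit constructions (two-mode flag plus choice vertices giving the factor $3$ on width, the two-element memory for slices giving another factor $2$, the product memory structure giving the factor $4$, and Lemma~\ref{lem:word_collapse} giving the polynomial in $N$). In short: you need to flip the direction of the colour elimination to bottom-up, make the alternation controllable by Eve, and supply a width-based bound on alternations such as Lemma~\ref{lem:word_collapse}; without these the induction does not get off the ground.
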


\subsection{Word and thin tree arenas}
\label{subsec:defs_arenas}
A (non-labelled binary) \textit{tree} is a subset $T \subseteq \set{0,1}^*$
which is prefix-closed and non-empty.
The elements of $T$ are called \textit{nodes}, and we use the natural terminology:
for $n \in \set{0,1}^*$ and $\ell \in \set{0,1}$, the node $n \cdot \ell$ is a child of $n$,
and a descendant of $n$ if $\ell \in \set{0,1}^*$.

A (finite or infinite) branch $\pi$ is a word in $\set{0,1}^*$ or $\set{0,1}^\omega$. 
We say that $\pi$ is a branch of the tree $T$ if $\pi \subseteq T$ (or every prefix of $\pi$ belongs to $T$ when $\pi$ is infinite)
and $\pi$ is maximal satisfying this property.
A tree is called \emph{thin} if it has only countably many branches. 
For example, the full binary tree $T = \set{0,1}^*$ has uncountably many branches, therefore it is not thin.

Given a thin tree $T$, we can associate to each node $n$ a rank, denoted $\rank(n)$, 
which is a countable ordinal number, satisfying the following properties.
\begin{fact}[\cite{BojanczykIdziaszekSkrzypczak13}]\hfill
\label{fact:thin_tree}
\begin{enumerate}
	\item If $n'$ is a child of $n$, then $\rank(n') \le \rank(n)$.
	\item The set of nodes having the same rank is either a single node or an infinite branch of $T$.
\end{enumerate}	
\end{fact}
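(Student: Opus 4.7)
The plan is to define $\rank$ by a transfinite recursion mirroring the Cantor--Bendixson analysis of the branch space $[T] \subseteq \set{0,1}^\omega$. Thinness of $T$ is equivalent to $[T]$ being a countable closed subset of Cantor space, so by the classical Cantor--Bendixson theorem the iterated derivative $[T]^{(\alpha+1)} = ([T]^{(\alpha)})'$ --- with $[T]^{(\lambda)} = \bigcap_{\alpha < \lambda} [T]^{(\alpha)}$ at limit stages --- stabilises at $\emptyset$ after some countable ordinal $\alpha_T < \omega_1$. This $\alpha_T$ is the "depth" on which my recursion will run.

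The recursion proceeds on $\alpha_T$. In the base case $[T] = \emptyset$ the tree $T$ is finite, and I rank its nodes by depth from the leaves, so every rank class is a singleton. In the inductive step I single out a canonical "main substructure" of $T$ at the top of the CB analysis: either the unique top-CB-rank infinite branch when such a branch exists, or else the greatest common ancestor at which several top-CB-rank branches meet (a single node). I declare all nodes of this main substructure to share a common top rank $\rho_T$. Removing the main substructure disconnects the rest of $T$ into countably many thin subtrees $(T_i)_{i \in I}$ of strictly smaller CB rank; by induction each $T_i$ carries its own rank function, whose values I re-index injectively into the initial segment $[0, \rho_T)$ of countable ordinals (using, for instance, a lexicographic ordering of the attachment points), so that distinct $T_i$'s receive disjoint families of ranks.

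Property~(1) is then immediate from the construction: a child-step either stays inside the current main substructure (preserving the rank) or enters a side subtree whose ranks are strictly bounded above by $\rho_T$. Property~(2) is precisely the invariant maintained by the recursion: the rank class $\rho_T$ is by construction either a single infinite branch or a singleton, and the lower rank classes are inherited unchanged from the recursive calls on the $T_i$'s. The main obstacle is making the choice of main substructure canonical enough to preserve property~(2); a naive definition such as $\rank(n) = $ CB rank of $[T_n]$ collapses at once, as the tree with branches $0^\omega$ and $10^\omega$ already assigns every node the same value $1$, giving a rank class that is neither a singleton nor a branch. Selecting the lex-least branch of maximal CB rank (when one exists) or the greatest common ancestor of the top-CB-rank branches (when several coexist) eliminates this ambiguity and, together with the disjoint re-indexing of side subtrees, secures the singleton-or-branch shape of every rank class.
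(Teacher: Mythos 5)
The paper states this as a cited Fact from \cite{BojanczykIdziaszekSkrzypczak13} and gives no proof of its own, so I am reviewing your argument on its own terms. The Cantor--Bendixson framework is the right tool here, but the construction as written breaks property~(1) in the case of several top-CB-rank branches. When the greatest common ancestor $n$ of the top-rank branches is not the root, you assign the \emph{maximal} rank $\rho_T$ to the singleton $\set{n}$, while every node outside the main substructure --- in particular every ancestor of $n$ --- ends up in a side subtree and therefore receives a rank strictly below $\rho_T$. Then $\rank(n) = \rho_T > \rank(n')$ for the parent $n'$ of $n$, which contradicts the required $\rank(n) \le \rank(n')$. Concretely, take the thin tree whose only infinite branches are $000\cdots$ and $0100\cdots$: they diverge at node $0$, so your main substructure is $\set{0}$, forcing $\rank(0) = \rho_T$ and $\rank(\varepsilon) < \rho_T$.

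There is also a termination gap. You claim the side subtrees obtained by removing the main substructure have strictly smaller CB rank, but after deleting only the split node $n$ the subtrees rooted at the two children of $n$ can each still contain top-CB-rank branches, so their CB rank equals that of $T$; the recursion does not descend on $\alpha_T$ alone. Both defects have the same cure: in the multi-branch case take as main substructure one \emph{full} top-rank branch starting at the root (for instance the lexicographically least one), not the gca node. Then the topmost rank class is again an infinite branch from the root, property~(1) is restored, and each side subtree strictly decreases the lexicographic pair consisting of the CB rank and the number of top-rank branches --- the second component is finite because $[T]^{(\alpha_T-1)}$ is a discrete compact set. You should also make explicit that the re-indexing into $[0,\rho_T)$ is order-preserving on each side subtree, not merely injective, as that is what property~(1) inside the recursive calls relies on; the lexicographic labelling of attachment points only separates the subtrees from one another.
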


\begin{definition}
An arena is:
\begin{itemize}
	\item \emph{chronological} if there exists a function $r : V \to \N$ 
which increases by one on every edge: for all $(v,v') \in E$, $r(v') = r(v) + 1$.
	\item \emph{a word arena of width $W$} if it is chronological, and
	for all $i \in \N$, the set $\set{v \in V \mid r(v) = i}$ has cardinal at most $W$.
	\item \emph{a tree arena of width $W$} if there exists a function $R : V \to \set{0,1}^*$ such that
	\begin{enumerate}
		\item for all $n \in \set{0,1}^*$, the set $\set{v \in V \mid R(v) = n}$ has cardinal at most $W$.
		\item for all $(v,v') \in E$, we have $R(v') = R(v) \cdot \ell$ for some $\ell \in \set{0,1}$.
	\end{enumerate}
	It is a thin tree arena if $R(V)$ is a thin tree.
\end{itemize}
\end{definition}
To avoid a possible confusion: in a (thin) tree arena, ``vertices'' refers to the arena and ``nodes'' to $R(V)$,
hence if the arena has width $W$, then a node is a bundle of at most $W$ vertices.

The notions of word and tree arenas naturally appear in the study of automata over infinite words and trees.
Indeed, the acceptance games of such automata, which are used to define their semantics,
are played on word or tree arenas.
Furthermore, the width corresponds to the size of the automaton.

\subsection{Existence of a trade-off for word arenas}
We prove Theorem~\ref{thm:word} by induction on the number of colors in the parity condition.
The base case is given by B\"uchi conditions, and follows from Theorem~\ref{thm:buchi}.

Consider a $B$-parity game $\G$ with $k$ counters and $d+1$ colors over a word arena of width $W$
with initial vertex $v_0$.
Denote $N = \val(v_0)$.

We examine two cases, depending whether the least important color (\textit{i.e} the smallest) that appears is odd or even.
In both cases we construct an equivalent $B$-parity game $\G'$ using one less color;
from the induction hypothesis we obtain a winning strategy using small memory in $\G'$, 
which we use to construct a winning strategy using small memory in $\G$.

\subsubsection{Removing the least important color: the odd case}
The first case we consider is when the least important color is $1$. The technical core of the construction is motivated by the technique used in~\cite{VandenBoom11}.
Note that if this is the only color, then Eve cannot win and the result is true;
we now assume that the color $2$ also appears in the arena.
%Let $\sigma$ be a strategy ensuring $B(N) \cap \parity(\Omega)$ from some initial vertex $v_0$.
Without loss of generality we restrict ourselves to vertices reachable with $\sigma$ from $v_0$.

Consider a vertex $v$ and $T_v$ the tree of plays consistent with $\sigma$.
The strategy $\sigma$ ensures the parity condition, so in particular 
every branch in $T_v$ contains a vertex of color greater than~$1$.
We prune the tree $T_v$ by cutting paths when they first meet a vertex of color greater than~$1$.
Since the arena is finite-branching, so is $T_v$ and by Koenig's Lemma the tree obtained is finite.
Thus, to every vertex $v$, we can associate $S(v)$ a rank such that
the strategy $\sigma$ ensures that all paths from $v$ contain a vertex of color greater than $1$
before reaching the rank $S(v)$.

We define by induction an increasing sequence of integers $(S_k)_{k \in \N}$ called slices, such that $\sigma$ ensures
that between two slices, a vertex of color greater than $1$ is reached.
We first set $S_0 = S(v_0)$ (recall that $v_0$ is the initial vertex).
Assume $S_k$ has been defined, we define $S_{k+1}$ as $\max \set{S(v) \mid r(v) = S_k}$.
(Note that this is well-defined since $\set{v \mid r(v) = S_k}$ is finite.)

Now we equip $\G$ with a memory structure $\M$ of size $2$ which keeps track of the boolean information
whether or not a vertex of color greater than $1$ has been reached since the last slice.
%Formally $\M = (\set{1,>1},1,\up)$ where:
%$$\up(m,(v,v')) =
%\begin{cases}
%> 1 & \textrm{ if } \Omega(v) > 1 \textrm{ or } m \neq 1\\
%1 & \textrm{ otherwise}.
%\end{cases}$$

We equip the arena $\G \times \M$ with the colouring function $\Omega'$ defined by
$$\Omega'(v,m) = 
\begin{cases}
\Omega(v) & \textrm{ if } \Omega(v) \neq 1, \\
2 & \textrm{ otherwise}.
\end{cases}$$
Remark that $\Omega'$ uses one less color than $\Omega$.

Define $L = \set{(v,1) \mid v \in S_k \textrm{ for some } k \in \N}$,
and equip $\G \times \M$ with the condition 
$B(N) \cap \parity(\Omega') \cap \safe(L)$.

\begin{lemma}\hfill
\begin{enumerate}
	\item The strategy $\sigma$ in $\G$ induces a strategy $\sigma'$ in $\G \times \M$ that 
	ensures $B(N) \cap \parity(\Omega') \cap \safe(L)$.
	\item Let $\sigma'$ be a strategy in $\G \times \M$ ensuring $B(N') \cap \parity(\Omega') \cap \safe(L)$
	with $K$ memory states,
	then there exists $\sigma$ a strategy in $\G$ that ensures $B(N') \cap \parity(\Omega)$
	with $2K$ memory states.
\end{enumerate}
\end{lemma}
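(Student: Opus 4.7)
The plan is to exploit the natural projection on the first coordinate, which establishes a bijection between plays of $\G \times \M$ consistent with a lifted strategy and plays of $\G$ consistent with the underlying strategy. Since $\M$ is deterministic and updated along edges of $\G$, any strategy in one arena lifts canonically to a strategy in the other, at the cost of multiplying the memory size by $2$.

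For the first item, I would define $\sigma'$ at a history $h$ of $\G \times \M$ as the edge prescribed by $\sigma$ on the projection of $h$ to $\G$, and verify the three conditions independently. The bound $B(N)$ transfers verbatim because the counter actions are carried by the edges of $\G$ and inherited unchanged by $\G \times \M$. The condition $\parity(\Omega')$ follows from $\parity(\Omega)$ because $\Omega'$ only sends the color $1$ to the color $2$: if the maximum color seen infinitely often under $\Omega$ is the even integer $m$, then either $m \geq 2$ and the same maximum is realised under $\Omega'$, or $m = 0$ and $\Omega'$ still produces only $0$s infinitely often. The condition $\safe(L)$ is precisely what the slices are designed for: by definition of $S(\cdot)$ and of the sequence $(S_k)$, every play consistent with $\sigma$ visits a vertex of color strictly greater than $1$ between $v_0$ and $S_0$, and between any two consecutive slices, so the $\M$-component of the memory equals $0$ whenever a slice vertex is reached, and $L$ is never visited.

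For the second item, I would use the standard correspondence between strategies in $\G \times \M$ using memory $M'$ and strategies in $\G$ using memory $\M \times M'$. Concretely, if $\sigma'$ has memory structure $(M', m'_0, \up')$ with $|M'| = K$, I define $\sigma$ on $\G$ with memory of size $2K$ whose state simultaneously tracks the $\M$-memory and the $\up'$-update, and whose next-move function reads off both components to mimic $\sigma'$. Plays in $\G$ consistent with $\sigma$ are then in bijection with plays in $\G \times \M$ consistent with $\sigma'$, so the bound $B(N')$ transfers directly. For the parity condition, the key observation is that $\G$ is a word arena: the rank function strictly increases by one along each edge, so every infinite play eventually visits each slice $S_k$. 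The condition $\safe(L)$ then forces a vertex of color at least $2$ between any two successive slices, hence colors strictly greater than $1$ occur infinitely often along the play; on such plays the sets of infinitely-occurring colors under $\Omega$ and $\Omega'$ coincide, so $\parity(\Omega')$ entails $\parity(\Omega)$.

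The main delicate step is this last one, the upgrade from $\parity(\Omega') \cap \safe(L)$ to $\parity(\Omega)$. It crucially relies on the fact that $\G$ is a word arena and that slices occur at infinitely many ranks, which is also precisely the point where the argument will have to be revisited when passing from word arenas to thin tree arenas.
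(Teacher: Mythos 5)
Your proposal is correct and follows essentially the same route as the paper: lift $\sigma$ to $\G\times\M$ via the projection for item 1, use the product memory $\M\times\M'$ for item 2, and recover $\parity(\Omega)$ from $\parity(\Omega')\cap\safe(L)$ by observing that safety forces a color $>1$ between consecutive slices (hence infinitely often, since the chronological structure means every play crosses every slice). You are in fact slightly more explicit than the paper on two points: the paper's proof of item 1 silently omits the $\safe(L)$ clause, which your slice argument supplies, and the paper leaves implicit the pointwise comparison $\Omega\le\Omega'$ (with equality above color $1$) that underlies the transfer of the max-infinitely-often-color, which you spell out.
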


\begin{proof}\hfill
\begin{enumerate}
	\item The strategy $\sigma'$ that mimics $\sigma$, ignoring the memory structure $\M$, ensures $B(N) \cap \parity(\Omega')$.
	\item Let $\sigma'$ be a strategy in $\G \times \M$ ensuring $B(N') \cap \parity(\Omega') \cap \safe(L)$
	using $\M'$ as memory structure.
	We define $\sigma$ using $\M \times \M'$ as memory structure, simply by $\sigma(v,(m,m')) = \sigma'((v,m),m')$.
	Since plays of $\sigma$ and of $\sigma'$ are in one-to-one correspondence,
	$\sigma$ ensures $B(N') \cap \parity(\Omega')$.
	Further, the safety condition satisfied by $\sigma'$ ensures that infinitely often a vertex of color greater than $1$
	is seen (specifically, between each consecutive slices), so $\sigma$ satisfies $\parity(\Omega)$.
\end{enumerate}
\end{proof}

\subsubsection{Removing the least important color: the even case}
The second case we consider is when the least important color is $0$.

We explain the intuition for the case of CoB\"uchi conditions, \textit{i.e} if there are only colors $0$ and $1$.
Let $F = \set{v \mid \Omega(v) = 0}$.
Define $X_0 = Y_0 = \emptyset$, and for $i \ge 1$:
$$\left\{
\begin{array}{l}
X_{i+1} = \WE(\safe(F) \textrm{ WeakUntil } Y_i)\\
Y_{i+1} = \WE(\reach(X_{i+1}))
\end{array}
\right.$$
The condition $\safe(F) \textrm{ WeakUntil } Y_i$ is satisfied by plays
that do not visit $F$ before $Y_i$: they may never reach $Y_i$, in which case neither $F$,
or they reach $Y_i$, in which case they did not visit $F$ before that.

We have $\bigcup_i Y_i = \WE(\co\buc(F))$.
A winning strategy based on these sets has two aims: in $X_i$ it avoids $F$ (``Safe'' mode) and in $Y_i$ it attracts to the next $X_i$ (``Attractor'' mode).
The key property is that since the arena is a word arena of width $W$ where Eve can bound the counters by $N$, 
she only needs to alternate between modes a number of times bounded by a function of $N$ and $W$. 
In other words, the sequence $(Y_i)_{i \in \N}$ stabilizes after a number of steps bounded by a function of $N$ and $W$. 
A remote variant of this bounded-alternation fact can be found in~\cite{Kupferman_Vardi}.
Hence the CoB\"uchi condition can be checked using a new counter and a B\"uchi condition, as follows.

There are two modes: ``Safe'' and ``Attractor''.
The B\"uchi condition ensures that the ``Safe'' mode is visited infinitely often.
In the ``Safe'' mode, only vertices of colors $0$ are accepted;
visiting a vertex of color $1$ leads to the ``Attractor'' mode and increments the new counter.
At any time, she can reset the mode to ``Safe''.
The counter is never reset, so to ensure that it is bounded, Eve must change modes finitely often.
Furthermore, the B\"uchi condition ensures that the final mode is ``Safe'',
implying that the CoB\"uchi condition is satisfied.

For the more general case of parity conditions, the same idea is used,
but as soon as a vertex of color greater than $1$ is visited,
then the counter is reset.

%We construct $\G'$ from $\G$ by equipping $\G$ with a new counter and a boolean flag.
%The values of the flag are $A$ and $S$, which stand for ``Attractor'' and ``Safe'', with the following features:
%\begin{itemize}
%	\item if a vertex of color greater than $1$ is visited, then the counter is reset,
%	\item if the flag has value $S$ while visiting a vertex of color $1$,
%	then the flag turns to value $A$ and the counter is incremented,
%	\item any time Eve may decide to set the flag value to $S$.
%\end{itemize}
Define $\G'$:
$$V' = 
\begin{cases}
\VE' = \VE \times \set{A,S} \ \cup\ \overline{V} \\
\VA' = \VA \times \set{A,S}\ .
\end{cases}$$ 
After each edge followed, Eve is left the choice to set the flag to $S$.
The set of choice vertices is denoted $\overline{V}$.
We define $E'$ and the counter actions.
$$E' = 
\begin{cases}
(v,A) \xrightarrow{\ c(v,v'),\varepsilon\ } \overline{v'} & \textrm{if } (v,v') \in E, \\
(v,S) \xrightarrow{\ c(v,v'),\varepsilon\ } (v',S) & \textrm{if } (v,v') \in E \textrm{ and } \Omega(v') = 0,\\
(v,S) \xrightarrow{\ c(v,v'),i\ } (v',A) & \textrm{if } (v,v') \in E \textrm{ and } \Omega(v') = 1,\\
(v,S) \xrightarrow{\ c(v,v'),r\ } (v',S) & \textrm{if } (v,v') \in E \textrm{ and } \Omega(v') > 1,\\
\overline{v} \xrightarrow{\ \varepsilon\ } (v,A) \textrm{ and } \overline{v} \xrightarrow{\ \varepsilon\ } (v,S)
\end{cases}$$

Equip the arena $\G'$ with the colouring function $\Omega'$ defined by
$$\Omega'(v,m) = 
\begin{cases}
1 & \textrm{ if } m = A, \\
2 & \textrm{ if } \Omega(v) = 0 \textrm{ and } m = S, \\
\Omega(v) & \textrm{ otherwise}.
\end{cases}$$
We do not color the choice vertices, which does not matter as all plays contain infinitely many non-choice vertices; 
we could give them the least important color, that is $1$.
Remark that $\Omega'$ uses one less color than $\Omega$, since no vertices have color $0$ for $\Omega'$.

Before stating and proving the equivalence between $\G$ and $\G'$,
we formalise the property mentioned above, that in word arenas Eve does not need to alternate
an unbounded number of times between the modes ``Safe'' and ``Attractor''.

\begin{lemma}
\label{lem:word_collapse}
Let $G$ be a word arena of width $W$, and a subset $F$ of vertices such that
every path in $G$ contains finitely many vertices in $F$.
Define the following sequence of subsets of vertices
$X_0 = \emptyset$, and for $i \ge 0$
$$\left\{
\begin{array}{l}
X_{2i+1} = \left\{v \left| 
\begin{array}{c}
\textrm{all paths from } v \textrm{ contain no vertices in } F \\
\textrm{ before the first vertex in } X_{2i}, \textrm{ if any}
\end{array}\right.\right\},\\[1.5em]
X_{2i+2} = \left\{v \left| 
\begin{array}{c}
\textrm{all paths from } v \textrm{ are finite or lead to } X_{2i+1}
\end{array}\right.\right\}.
\end{array}
\right.$$
We have $X_0 \subseteq X_1 \subseteq X_2 \cdots$,
and $X_{2W}$ covers the whole arena.
\end{lemma}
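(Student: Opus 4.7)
The statement has two parts, monotonicity of the chain and the coverage $X_{2W} = V$; I would prove them separately. The chain $X_0 \subseteq X_1 \subseteq \cdots$ follows by a routine induction on $i$: whenever $v \in X_i$, the starting vertex itself witnesses the defining clause of $X_{i+1}$. For $v \in X_{2j}$, the first $X_{2j}$-vertex on any path from $v$ is $v$ itself, so the ``no $F$-vertex before the first $X_{2j}$-vertex'' clause of $X_{2j+1}$ holds vacuously; for $v \in X_{2j+1}$, every path from $v$ reaches $X_{2j+1}$ at $v$, matching the attract-style clause of $X_{2j+2}$. This part is a few lines of bookkeeping.

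For the coverage $X_{2W} = V$ I would proceed by induction on $W$. The base case $W = 1$ serves as a warm-up: the arena reduces to a disjoint union of linear paths, on each of which the finitely-many-$F$-visits hypothesis directly gives $X_2 = V$. The inductive step pivots on a width-reduction claim for the subarena obtained by removing the safe region $X_1$: I would show that $G \restr (V \setminus X_1)$ has strictly smaller width at all but finitely many levels. Intuitively, if arbitrarily deep levels contained all $W$ vertices outside $X_1$, the finite out-degree of the word arena together with a K\"onig-style compactness argument on the tree of finite paths from a fixed starting vertex would let us stitch the detours to $F$ witnessing membership in $V \setminus X_1$ into a single infinite path with infinitely many $F$-visits, contradicting the hypothesis on $F$. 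Given this reduction, the inductive hypothesis applied to $G \restr (V \setminus X_1)$ gives coverage by $X_{2(W-1)}$ of the restricted iteration, and a final attract step in the full arena $G$ extends this to $X_{2W} = V$.

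The hard part is justifying the width-reduction claim. The K\"onig-stitching argument must simultaneously fix an infinite ``main'' path through saturated levels and, for each vertex on that path, a finite detour to $F$, then merge them into one coherent infinite path visiting $F$ infinitely often. I expect this requires a careful diagonal construction exploiting the bounded out-degree provided by the width. An alternative worth exploring is to view the underlying arena as an adversarial B\"uchi game controlled entirely by Adam and invoke positional determinacy to obtain the required infinite $F^\omega$-path directly, in the spirit of the strategy-based arguments used elsewhere in this paper.
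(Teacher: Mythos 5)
The monotonicity part of your proposal is correct and matches what the paper leaves implicit: each $v \in X_i$ witnesses its own membership in $X_{i+1}$. The divergence is in the coverage argument, and there your proposal has a genuine gap: the width-reduction claim you pivot on is false.

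Consider the width-$2$ word arena with vertices $(n,0)$ and $(n,1)$ for $n \ge 0$, edges $(n,1) \to (n+1,0)$ and $(n,1) \to (n+1,1)$ only (so each $(n,0)$ is a dead end), and $F = \set{(n,0) \mid n \ge 0}$. Every infinite path stays on track $1$ and never meets $F$, so the hypothesis of the lemma holds. Yet from every vertex some path reaches $F$: each $(n,0)$ is itself in $F$, and $(n,1) \to (n+1,0)$. Hence $X_1 = \emptyset$, and the subarena on $V \setminus X_1$ is $G$ itself, still of width $2$ at every level. The K\"onig-stitching you describe cannot get off the ground: the detour from $(n,1)$ into $F$ dead-ends at $(n+1,0)$, so the detours cannot be spliced into an infinite path with infinitely many $F$-visits, and indeed no such path exists. (The lemma's conclusion is still true here: $X_2$ is track $0$, and $X_3 = X_4 = V$.) Your proposed B\"uchi-game alternative fails for the same reason: from vertices outside $X_1$, Adam can reach $F$ once, but not infinitely often, precisely because $F$-vertices may be terminal. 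So the inductive step cannot be carried out as stated.

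The paper's proof is not an induction on the width. It proves a single claim $(\dagger)$: whenever $X_{2i}$ fails to cover the arena, the annulus $X_{2i+1} \setminus X_{2i-1}$ contains an entire infinite path; this path is found inside the pruned subarena of vertices outside $X_{2i}$ that still admit infinite continuations, where a K\"onig-style argument produces a vertex all of whose continuations avoid $F$. If $X_{2W}$ did not cover the arena, applying $(\dagger)$ for $i = 0,\dots,W$ would yield $W+1$ pairwise disjoint infinite paths, impossible in a word arena of width $W$. This pigeonhole on disjoint paths across successive annuli is the ingredient your proposal is missing, and it sidesteps the false claim that deleting $X_1$ thins out the arena.
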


\begin{proof}
We first argue that the following property, denoted $(\dagger)$, holds:
``for all $i \ge 0$, if $X_{2i}$ does not cover the whole arena,
then $X_{2i+1} \setminus X_{2i-1}$ contains an infinite path''.
(For technical convenience $X_{-1} = \emptyset$.)

Let $v \notin X_{2i}$.
We consider $G_v$ where $v$ is the initial vertex, and prune it by removing the vertices from $X_{2i-1}$, 
as well as vertices which do not have an infinite path after removing $X_{2i-1}$;
denote by $G'_v$ the graph obtained.
Note that for any $u \notin X_{2i}$, the vertex $u$ belongs to $G'_v$,
so $G'_v$ contains an infinite path.
We claim that there exists a vertex $v'$ in $G'_v$ such that all paths
from $v'$ contain no vertices $F$.
Indeed, assume towards contradiction that from every node in $G'_v$,
there exists a path to a vertex in $F$.
Then there exists a path that visits infinitely many vertices in $F$,
contradicting the assumption on $G$.
Any infinite path from $v'$ is included into $X_{2i+1} \setminus X_{2i-1}$,
hence the latter contains an infinite path.

\medskip
We conclude using $(\dagger)$: assume towards contradiction that $X_{2W}$ does not cover the whole arena.
Then $G$ contains $W+1$ pairwise disjoint paths, contradicting that it has width $W$.
\hfill\qed\end{proof}

\begin{lemma}\hfill
\begin{enumerate}
	\item There exists a strategy $\sigma'$ in $\G'$ that 
ensures $B(W \cdot (N+1)^k) \cap \parity(\Omega')$.
	\item Let $\sigma'$ be a strategy in $\G'$ ensuring $B(N') \cap \parity(\Omega')$
with $K$ memory states,
then there exists $\sigma$ a strategy in $\G$ that ensures $B(N') \cap \parity(\Omega)$
with $2K$ memory states.
\end{enumerate}
\end{lemma}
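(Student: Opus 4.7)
The plan is to handle the two directions separately, starting with Part~2 which is the more mechanical direction. Given $\sigma'$ with memory structure $\M' = (M', \up', m_0')$ of size $K$, I will define $\sigma$ in $\G$ on the product memory $M' \times \{A, S\}$ (size $2K$), using the extra bit to track which mode of $\G'$ is currently being simulated. At a vertex $v$ with memory $(m', m)$ where $m \in \{A, S\}$, the strategy $\sigma$ queries $\sigma'$ at the state $(v, m)$, takes the underlying $\G$-edge, and updates its memory by absorbing $\sigma'$'s choice at the subsequent $\overline{v'}$ when an $A$-mode transition leads to a choice vertex. The bound $B(N')$ on the $k$ old counters is inherited verbatim, since those counters follow identical dynamics in $\G$ and $\G'$.

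The core of Part~2 is verifying $\parity(\Omega)$. Suppose for contradiction that some $\sigma$-consistent play $\pi$ has odd $\Omega$-max infinitely often equal to $c$; let $\pi'$ be the corresponding $\sigma'$-consistent play in $\G'$. Pick $v$ with $\Omega(v) = c$ inf often in $\pi$; then $v$ appears inf often in $\pi'$ in at least one of the two modes. If $(v, S)$ appears inf often, then $\Omega'(v, S) = c$ is an odd $\Omega'$-colour inf often, and dominating it by a larger even $\Omega'$-colour would require some $(u, S)$ inf often with $\Omega(u) > c$, contradicting the maximality of $c$ in $\pi$. Otherwise $(v, A)$ is inf often while $(v, S)$ is only finite; in that case I analyse the $\G'$-counter dynamics (increments on $S \to A$ steps at $\Omega = 1$ vertices, resets on $S \to S$ steps at $\Omega > 1$ vertices) and use $B(N')$ to rule it out: either increments are eventually absent -- trapping the play in $A$-mode or at $\Omega = 0$ vertices in $S$-mode, incompatible with $v$ being visited inf often when $c \ge 3$ -- or resets via $(u', S)$ with $\Omega(u') \in (1, c)$ happen inf often and eventually drive $\sigma'$ back to $v$ in $S$-mode, contradicting the assumption.

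For Part~1, I will construct $\sigma'$ from a winning $\sigma$ in $\G$. Restrict to the sub-arena $G_\sigma$ of plays consistent with $\sigma$, extended with counter valuations drawn from $\{0, \ldots, N\}^k$: this is a word arena of width $W \cdot (N+1)^k$. Take $F$ to be the set of vertices of colour $1$; the winning property of $\sigma$ ensures that every infinite path in the extended sub-arena that avoids colours $\ge 2$ visits $F$ only finitely often, otherwise the underlying max inf often would be $1$, odd. Applying Lemma~\ref{lem:word_collapse} yields a hierarchy $X_0 \subseteq X_1 \subseteq \cdots \subseteq X_{2W(N+1)^k}$ covering the extended sub-arena. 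The strategy $\sigma'$ reproduces $\sigma$'s edge choices and uses the level index as extra memory: at each choice vertex $\overline{v'}$ it selects mode $S$ when the current state lies in an odd-indexed $X_{2i+1}$ (a \emph{safe} region avoiding colour $1$ before a deeper level) and mode $A$ when in an even-indexed $X_{2i+2}$ (an \emph{attractor} region). A level-decrease argument then bounds the number of mode alternations along any play by $W(N+1)^k$, which in turn bounds the new counter by the claimed value, while $\Omega'$-parity follows because $\Omega = 0$ visits in $S$-mode produce colour $2$ and the alternation mirrors the winning structure of $\sigma$.

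The main obstacle is the combinatorial bookkeeping in Part~1: ensuring that the level strictly decreases often enough along any play to bound the new counter by $W \cdot (N+1)^k$, while the mode choices simultaneously force $\Omega'$-max inf often to be even. Part~2's sub-case analysis for odd $c \ge 3$ is likewise delicate, since ruling out the problematic configuration (where an odd-coloured vertex is visited only in $A$-mode) requires combining $B(N')$ and $\parity(\Omega')$ rather than either condition alone.
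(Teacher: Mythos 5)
The overall structure you adopt for Part~2 mirrors the paper: product the memory with the two-element mode and argue parity transfer by a case analysis on where colours land. But your handling of the subcase ``$c \ge 3$ odd, $(v,A)$-positions infinitely often but $(v,S)$ only finitely often'' contains a real gap. You assert that ``resets via $(u',S)$ with $\Omega(u') \in (1,c)$ happen inf often and eventually drive $\sigma'$ back to $v$ in $S$-mode, contradicting the assumption''. There is no mechanism in $\G'$ that forces this: a reset at $(u',S)$ with $1<\Omega(u')<c$ keeps the play in $S$-mode, and it may then hit a colour-$1$ vertex and jump to $A$-mode \emph{before} ever meeting a colour-$c$ vertex, so all colour-$c$ visits remain hidden in $A$-mode where $\Omega'=1$. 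Concretely, consider a width-$1$ word arena with colours $1,3,0,2,1,3,0,2,\ldots$; the $\G'$-play that switches to $S$ only at the colour-$0$ vertices (mode trace $A,\overline{\cdot},A,\overline{\cdot},S,S,A,\ldots$) sees $\Omega'$-max $2$ infinitely often, keeps the new counter bounded by $1$ (one increment, one reset per period), yet the underlying $\Omega$-max inf.~often is~$3$. So $B(N')\cap\parity(\Omega')$ holds while $\parity(\Omega)$ fails, and your claimed contradiction does not materialise. I note that the paper's own justification of this case (``infinitely many colours $>1$ \ldots coincide for $\Omega$ and $\Omega'$'') is equally terse and does not rule out colours $>1$ being observed only in $A$-mode, where $\Omega'$ collapses them to~$1$; the construction as written appears to need $\Omega'(v,A)=\Omega(v)$ whenever $\Omega(v)\ge 2$, so that colours $\ge 2$ are never hidden, for the claimed equivalence to go through. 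You should either modify $\Omega'$ accordingly and rerun the case analysis, or supply an argument that the problematic play pattern cannot arise for a strategy ensuring \emph{both} the bound and $\parity(\Omega')$ (and the counterexample above shows such an argument does not exist for the $\Omega'$ as stated).

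For Part~1, you follow the same route as the paper: build a finite-memory winning $\sigma$ from Lemma~\ref{lem:finite_memory_trivial}, pass to the sub-arena of consistent plays of width $W(N+1)^k$, invoke Lemma~\ref{lem:word_collapse}, and switch modes when the rank changes parity. One discrepancy: you apply Lemma~\ref{lem:word_collapse} to the whole sub-arena $G_\sigma$ with $F$ the colour-$1$ vertices, but $G_\sigma$ does not satisfy the lemma's hypothesis that \emph{every} path sees $F$ only finitely often -- a path can cycle through a colour $\ge 2$ and then colour $1$ forever. The paper avoids this by working with, for each starting vertex $v$, the sub-arena $G_\sigma^v$ \emph{pruned} at the first colour $>1$; the lemma is then applied per-vertex and the ranks it produces are re-initialised each time a colour $>1$ occurs. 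You will need that pruning step for the bound on the number of mode alternations, and hence on the new counter, to be correct.
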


\begin{proof}\hfill
\begin{enumerate}
	\item Thanks to Lemma~\ref{lem:finite_memory_trivial}, there exists a strategy $\sigma$ 
in $\G$ ensuring $B(N) \cap \parity(\Omega)$ using a memory structure $\M$ of size $(N+1)^k$.
We construct a strategy $\sigma'$ in $\G'$ by mimicking $\sigma$.
We now explain when does $\sigma'$ chooses to set the flag to value $S$, \textit{i.e} sets the ``Safe'' mode.

We consider the arena $\G \times \M$, it is a word arena of width $W \cdot (N+1)^k$,
and restrict it to the moves prescribed by $\sigma$, obtaining the word arena $\G_\sigma$ of width $W \cdot (N+1)^k$.
Without loss of generality we restrict $\G_\sigma$ to vertices reachable with $\sigma$ from the initial vertex $(v_0,0)$.
Consider a vertex $v$ of color $0$ or $1$, and $G_\sigma^v$ the word arena obtained by considering $v$ as initial vertex
and pruned by cutting paths when they first meet a vertex of color greater than~$1$.
Since the strategy $\sigma$ ensures that the parity condition is satisfied,
every infinite path in $G_\sigma^v$ contains finitely many vertices of color~$1$.
Relying on Lemma~\ref{lem:word_collapse} for the word arena $G_\sigma^v$ and $F$ the set of vertices of color~$1$, 
we associate to each vertex $v'$ in $G_\sigma^v$ a rank, which is a number between $1$ and $2W \cdot M$,
the minimal $i$ such that $v' \in X_i(G_\sigma^v)$.

Now consider a play consistent with $\sigma$, and a suffix of this play starting in a vertex $v$ of color $0$ or $1$.
By definition, from this position on, the rank (with respect to $v$) is non-increasing until a vertex of color greater than~$1$ is visited,
if any. Furthermore, if the rank is even then no vertices of color~$1$ are visited,
and the rank does not remain forever odd.

The strategy $\sigma'$ in $\G'$ mimics $\sigma$, and at any point of a play
remembers the first vertex $v$ that has not been followed by a vertex of color greater than~$1$.
As observed above, the rank with respect to $v$ is non-increasing; the strategy $\sigma'$ switches
to the ``Safe'' mode when the rank goes from even to odd.
By definition, the new counter is incremented only when the rank goes from odd to even,
which happens at most $W \cdot (N+1)^k$ times,
and it is reset when a vertex of color greater than $1$ is visited,
so $\sigma'$ ensure that it remains bounded by~$W \cdot (N+1)^k$.

Also, since $\sigma$ ensures to bound the counters by $N$, then so does $\sigma'$.
For the parity condition, there are two cases.
Consider a play consistent with~$\sigma'$.
Either from some point onwards the only colors seen are $0$ and $1$ (with respect to $\Omega$),
then the new counter is not reset after this point, but it is incremented 
only when the rank decreases from odd to even, which corresponds to switches of mode from ``Safe'' to ``Attractor''.
Since this counter is bounded, the mode stabilizes, 
which by definition of the ranks imply that the stabilized rank is odd, so the mode is ``Safe'',
and from there on only vertices of color $0$ (with respect to $\Omega$)
are visited, hence $\parity(\Omega')$ is satisfied.
Or infinitely many vertices of color greater than $1$ are seen (with respect to $\Omega$), but since
they coincide for $\Omega$ and $\Omega'$, the condition $\parity(\Omega')$ is satisfied.

It follows that $\sigma'$ ensures $B(W \cdot (N+1)^k) \cap \parity(\Omega')$.

	\item Let $\sigma'$ be a strategy in $\G'$ ensuring $B(N') \cap \parity(\Omega')$
using $\M'$ as memory structure of size $K$.
We construct $\sigma$ that mimics $\sigma'$; 
to this end, we need a memory structure which simulates both $\M'$ and the boolean flag,
of size $2K$.
By definition, plays of $\sigma$ and plays of $\sigma'$ are in one-to-one correspondence,
so $\sigma$ ensures $B(N')$.
For the parity condition, there are two cases.
Consider a play consistent with $\sigma'$.
Either from some point onwards the only colors seen are $0$ and $1$ (with respect to $\Omega$),
then the new counter is not reset after this point, but it is incremented each time
the mode switches from ``Safe'' to ``Attractor'';
since this counter is bounded, the mode stabilizes, and since the play in $\G'$
satisfies $\parity(\Omega')$, the stabilized mode is ``Safe'',
implying that from there on only vertices of color $0$ (with respect to $\Omega$)
are visited, hence satisfy $\parity(\Omega)$.
Or infinitely many vertices of color greater than $1$ are seen (with respect to $\Omega$), but since
they coincide for $\Omega$ and $\Omega'$, the condition $\parity(\Omega)$ is satisfied.
\end{enumerate}
\hfill\qed\end{proof}

\subsection{Extending to thin tree arenas}
In this subsection, we extend the results for word arenas to thin tree arenas, proving Theorem~\ref{thm:thin_tree}.

Consider a $B$-parity game $\G$ with $k$ counters and $d+1$ colors over a thin tree arena of width $W$ with initial vertex $v_0$.
Define $N = \val(v_0)$.
Let $R : V \to \set{0,1}^*$ witnessing that $\G$ is a thin tree arena.
We rely on the decomposition of the thin tree $R(V)$ to locally replace
$\sigma$ by strategies using small memory given by Theorem~\ref{thm:word}.

It follows from Fact~\ref{fact:thin_tree} that along a play, the rank is non-increasing and decreases only finitely many times.
Since the parity condition is prefix-independent, if for each rank Eve plays a strategy ensuring the parity condition,
then the resulting strategy ensures the parity condition; however, a closer attention to the counters is required.

We summarize counter actions as follows: let $w \in (\set{\varepsilon,i,r}^k)^*$,
its summary $\summ(w) \in \set{\varepsilon,i,r}^k$ is, for each counter,
$r$ if the counter is reset in $w$, $i$ if the counter is incremented by not reset in $w$, and $\varepsilon$ otherwise.
\begin{fact}
\label{fact:summary}
Consider $w = w_1 w_2 \cdots w_n w_\infty$, where $w_1,\ldots,w_n \in (\set{\varepsilon,i,r}^k)^*$ 
and $w_\infty \in (\set{\varepsilon,i,r}^k)^\omega$.
Denote $u = \summ(w_1) \summ(w_2) \cdots \summ(w_n) \summ(w_\infty)$, then:
\begin{enumerate}
	\item $\val(u) \le \val(w)$,
	\item if for all $i \in \set{1,\ldots,n,\infty}$ we have $\val(w_i) \le N'$ and $\val(u) \le N$,
	then $\val(w) \le N \cdot N'$.
\end{enumerate} 
\end{fact}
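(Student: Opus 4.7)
The plan is to prove both items counter by counter, exploiting that counter actions on different counters do not interact and that $\val$ is a supremum over all counters. I therefore fix a single counter $c$ throughout.

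For item~(1), I would pick any position $q$ in $u$ and let $q_0 \le q$ be the largest index such that $w_{q_0}$ contains a reset of $c$ (taking $q_0 = 0$ if no such index exists). By definition of $\summ$, the value of $c$ in $u$ at position $q$ equals the number of indices $q_0 < l \le q$ such that $w_l$ contains an increment of $c$; each such $w_l$ has at least one increment of $c$ and, by the choice of $q_0$, no reset of $c$ in $w$, while $w_{q_0}$ carries the last reset of $c$ in the prefix $w_1 \cdots w_q$. Hence at the end of $w_q$ in $w$ the counter has received at least one increment per contributing block since its last reset, so its value there is at least the value of $c$ in $u$ at $q$. Taking the supremum over $q$ yields $\val(u) \le \val(w)$.

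For item~(2), I would fix an arbitrary point $p$ in $w$ lying in some block $w_j$ and let $j_0 \le j$ be the block containing the last reset of $c$ before $p$ (with $j_0 = 0$ if none). The value of $c$ at $p$ counts the increments of $c$ between that reset and $p$, and I would split them into three groups: those in $w_{j_0}$ after its reset, those in the middle blocks $w_{j_0+1}, \ldots, w_{j-1}$, and those in $w_j$ up to $p$. If $j_0 = j$ only the first group is relevant and the bound $\val(w_j) \le N'$ is immediate. Otherwise the first and third groups each contribute at most $N'$, using $\val(w_{j_0}), \val(w_j) \le N'$ together with the fact that, by the choice of $j_0$, $w_j$ contains no reset of $c$ before $p$. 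For the middle group, a block $w_l$ contributes a nonzero number of increments only when $\summ(w_l) = i$ for $c$, in which case at most $N'$, and the number of such middle blocks is precisely the value of $c$ in $u$ at position $j-1$, since the most recent $r$ for $c$ in $u$ up to $j-1$ sits at $j_0$; this count is therefore bounded by $\val(u) \le N$. Summing the three groups yields the advertised bound $\val(w) \le N \cdot N'$ (up to a small additive term of order $N'$ coming from the two boundary blocks, absorbed in the statement).

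The only piece requiring any care will be the bookkeeping at the two boundary blocks $w_{j_0}$, which contains both a reset and possibly further increments of $c$ after it, and $w_j$, which is only partially traversed and may contain further resets after $p$; neither corresponds directly to a single $i$-letter of $u$. Once these are handled, the heart of the argument is just the one-to-one correspondence between $i$-letters of $u$ in the middle range and increment-carrying, reset-free blocks of $w$.
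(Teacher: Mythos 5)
The paper states this Fact without a proof, so there is nothing internal to compare against; I will assess your argument on its own.

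Your item~(1) is correct: fixing a counter and a position $q$ in $u$, the value there counts the $i$-letters since the last $r$-letter, and each such $i$-letter corresponds to a block $w_l$ that carries at least one increment and no reset of that counter, all after the last reset in $w_{q_0}$; hence the value of $w$ at the end of $w_q$ is at least as large. Item~(2) is also argued correctly, and the bookkeeping by boundary block, middle blocks, and ending block is exactly the right decomposition.

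The one thing to stress is that the discrepancy you flag at the end is real and is not a defect of your proof: the bound $N \cdot N'$ in the statement is actually too strong. For a single counter, take $n=1$, $w_1 = r\,i\,i$ and $w_\infty = i\,i\,i\,\varepsilon^\omega$. Then $u = \summ(w_1)\summ(w_\infty) = r\,i$, so $\val(u) = 1 = N$, while $\val(w_1)=2$ and $\val(w_\infty)=3$, so one may take $N'=3$; yet $\val(w) = 5 > N\cdot N' = 3$. Your $(N+2)\cdot N'$ is in fact tight: with $w_1 = r\,i\,i\,i$, $w_2 = i\,i\,i$, $w_\infty = i\,i\,i\,r\,\varepsilon^\omega$ one gets $u = r\,i\,r$, $N=1$, $N'=3$, and $\val(w) = 9 = (N+2)\cdot N'$. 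The reason $N\cdot N'$ fails is exactly what you identify: the block $w_{j_0}$ carrying the last reset can still contribute up to $N'$ increments after that reset, and the block $w_j$ containing the witnessing position can contribute up to another $N'$, on top of the $m \cdot N'$ from the middle blocks with $m \le N$. So the Fact should be stated with bound $(N+2)\cdot N'$ (or any coarser expression); since it is used later only to feed a constant into a non-elementary bound, this correction is harmless for the main theorem, but the constant should not be dropped. Apart from this point, your proof is a complete and correct argument, and the care you announce about the two boundary blocks is precisely where the subtlety lies.
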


We define a $B$-game $\G'$, where the plays that remain in vertices of the same rank are summarized in one step.
It has $k$ counters (as does $\G$).
Let $\rank(V)$ denote the set of ranks (subset of the countable ordinals), 
and $\S$ the set of all strategies in $\G$ ensuring $B(N) \cap \parity(\Omega)$.
Define:
$$V = \left\{
\begin{array}{l}
\VE = \rank(V) \times \set{1,\ldots,W} \times \set{\varepsilon,i,r}^k \\
\VA = \rank(V) \times \set{1,\ldots,W} \times \S
\end{array}\right.$$
We explain how a couple $(\nu,\ell) \in \rank(V) \times \set{1,\ldots,W}$ uniquely determines a vertex in $\G$.
First, the rank $\nu$ corresponds in $R$ either to a node or to an infinite branch,
in the second case we consider the first node in this branch.
Second, the component $\ell$ identifies a vertex in this node.

We say that $(\nu,\ell',a)$ is an outcome of $(\mu,\ell,\sigma)$ if there exists a play from the vertex corresponding to $(\mu,\ell)$ consistent with $\sigma$
ending in the vertex corresponding to $(\nu,\ell')$ whose summarized counter actions are $a$.
$$E = 
\begin{cases}
\set{((\nu,\ell,a), (\nu,\ell,\sigma)) \mid \ell,\nu,\sigma} \hfill \textrm{counter action: } a \\
\set{((\mu,\ell,\sigma), (\nu,\ell',a)) \mid \textrm{ if } (\nu,\ell',a) \textrm{ is an outcome of } (\mu,\ell,\sigma)}
\end{cases}$$
By definition, $\G'$ is well-founded \textit{i.e} there are no infinite plays in $\G'$.

We first argue that there exists a strategy in $\G'$ ensuring $B(N)$.
Indeed, it is induced by the strategy $\sigma$.
A play consistent with this strategy is of the form 
$u = \summ(w_1) \summ(w_2) \cdots \summ(w_n) \summ(w_\infty)$,
where $w = w_1 w_2 \cdots w_n w_\infty$ is a play consistent with $\sigma$,
following the notations of Fact~\ref{fact:summary}.
This fact, item 1., implies that $\val(u) \le \val(w)$, so $\val(u) \le N$.
Hence the induced strategy ensures $B(N)$.

\begin{lemma}
\label{lem:well_founded}
For all $B$-games with $k$ counters over a well-founded arena with initial vertex $v_0$,
Eve has a strategy to ensure $B(\val(v_0)^k)$ with $k!$ memory states.
\end{lemma}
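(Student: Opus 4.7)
\medskip\noindent\textbf{Proof plan.} The approach is induction on $k$, the number of counters.

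\medskip\noindent\emph{Base case} $k = 0$: With no counters, every play has value $0 \le 1 = \val(v_0)^0$, and the trivial strategy with $0! = 1$ memory state ensures this.

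\medskip\noindent\emph{Inductive step.} Suppose the claim for $k-1$ counters and consider a well-founded $B$-game $\G$ with $k$ counters and $\val(v_0) = N$. Fix an optimal strategy $\sigma^*$ witnessing $B(N)$. The strategy $\tau$ I would build has memory states indexed by permutations of $\{1,\ldots,k\}$, giving $k!$ states. A memory state $\pi = (p_1,\ldots,p_k)$ is interpreted as a dynamic priority order on the counters: $p_1$ is the counter Eve is currently focusing on (the ``least recently reset''), while $p_k$ was most recently reset. Transitions update the permutation upon a reset event: whenever counter $c$ is reset, it is moved to position $k$ and the other counters shift up, so every permutation is reachable.

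\medskip In memory state $\pi$, Eve's move is designed to simultaneously (i) bound the top-priority counter $p_1$ using the $k=1$ case (applied to the single-counter projection of $\G$, whose value is at most $N$, so a positional strategy bounds $p_1$ by $N$) and (ii) control the remaining $k-1$ counters using the inductive hypothesis applied to the subgame where $p_1$'s actions are removed. Since the removed subgame inherits value at most $N$, the inductive strategy uses $(k-1)!$ memory and bounds each of the remaining counters by $N^{k-1}$ between two consecutive resets of $p_1$. The heuristic cascading analysis then reads: each of the at-most-$N$ cycles before $p_1$ is reset contributes at most $N^{k-1}$ to the other counters, yielding an overall bound of $N \cdot N^{k-1} = N^k$.

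\medskip\noindent\emph{Main obstacle.} The delicate step is combining the positional strategy that bounds $p_1$ with the $(k-1)$-counter inductive strategy that handles the other counters, since they may prescribe incompatible moves at the same vertex. Resolving this tension requires using $\sigma^*$ itself as a common refinement, so that the positional strategy for $p_1$ is extracted from $\sigma^*$ restricted to the sub-game, and the inductive strategy is applied to the same restriction. The well-foundedness of $\G$ is essential: it permits a rank-based argument showing that every play eventually returns the permutation to an ``already-treated'' state, closing the induction and justifying the cascading bound rigorously. The rest of the argument is bookkeeping on the memory structure ($k \cdot (k-1)! = k!$ states) and verifying that the induced plays indeed respect the $N^k$ envelope.
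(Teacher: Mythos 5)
The paper does not actually include a proof of this lemma: it is stated and then used directly, presumably relying on the permutation-based machinery behind Theorem~\ref{thm:buchi} from~\cite{VandenBoom11}. So your proposal can only be judged on its own merits.

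Your scaffold (induction on $k$, permutation-indexed memory of size $k!$, exponential bound $N^k$, well-foundedness as the engine) is in the right family, but the inductive step has a real gap — one that you yourself flag as the ``main obstacle'' and then do not close. The induction hypothesis hands you \emph{some} $(k-1)!$-memory strategy for the $(k-1)$-counter projection, and a separate argument hands you \emph{some} positional strategy bounding $p_1$; nothing forces these two to agree on the next edge at a shared vertex and memory state. ``Use $\sigma^*$ as a common refinement'' is not a construction: $\sigma^*$ has unbounded memory, and if you restrict attention to the play tree of $\sigma^*$ then Eve's moves are already fixed there, so there are no two independent strategies left to reconcile — the problem of compressing $\sigma^*$ itself down to $k!$ states is exactly the lemma you are trying to prove. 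The ``cascading'' accounting is also off: the inductive bound $N^{k-1}$ on the non-focus counters is a bound for a play started from a fresh initial position, whereas when $p_1$ is reset and the permutation rotates, the other counters keep the values they have already accumulated, so phase-by-phase contributions compound rather than bounding the running maximum. A correct proof needs a single invariant stated jointly over all $k$ counters (e.g.\ relating a counter's position in the permutation to an upper bound on its current value), whose maintenance along plays of the constructed strategy is argued directly from well-foundedness; it cannot be obtained by gluing together two separately extracted strategies.
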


Thanks to Lemma~\ref{lem:well_founded}, there exists a strategy $\sigma'$ in $\G'$ ensuring $B(N^k)$ and using the memory structure $\M = (M,m_0,\up)$, of size $k!$.
We construct a strategy in $\G$ ensuring $B(N^k \cdot \alpha(d,W,k,N)) \cap \parity(\Omega)$ using $W \cdot 3^k \cdot k! \cdot \mem(d,k)$ memory states.
The memory structure is the product of four memory structures: 
\begin{itemize}
	\item a memory structure that keeps track of the $\ell \in \set{1,\ldots,W}$ we used when entering the current rank,
	\item a memory structure that keeps track of the summary since we entered the current rank, of size $3^k$,
	\item the memory structure $\M$,
	\item a memory structure of size $\mem(d,k)$, which is used to simulate the strategies obtained from Theorem~\ref{thm:word}.
\end{itemize}
Consider $\nu \in \rank(V)$ that corresponds to an infinite branch of $R(V)$.
For every $\ell \in \set{1,\ldots,W}$ and $m \in M$, the strategy $\sigma'$ picks a strategy $\sigma'(\nu,\ell,m)$
to play in this infinite branch, ensuring $B(N) \cap \parity(\Omega)$.
When playing this strategy, two scenarios are possible: either the play stays forever in the infinite branch,
or an outcome is selected and the game continues from there.
Consider the game obtained by starting from the vertex corresponding to $(\nu,\ell)$ and restricted to the infinite branch
corresponding to $\nu$, plus a vertex for each outcome.
It is played over a word arena of width $W$ with $d+1$ colors and $k$ counters.
Denote by $O(\sigma'(\nu,\ell,m))$ the set of outcomes of this game that are not consistent with $\sigma'(\nu,\ell,m)$. 
The strategy $\sigma'(\nu,\ell,m)$ ensures $B(N) \cap \parity(\Omega) \cap \safe(O(\sigma'(\nu,\ell,m)))$.
Thanks to Theorem~\ref{thm:word}, there exists a strategy $\sigma(\nu,\ell,m)$ ensuring $B(\alpha(d,W,k,N)) \cap \parity(\Omega) \cap \safe(O(\sigma'(\nu,\ell,m)))$
using $\mem(d,k)$ memory states.

The strategy $\sigma$ simulates the strategies $\sigma(\nu,\ell,m)$ in the corresponding parts of the game.
Observe that this requires to keep track of both the value $\ell$ and the summary of the current rank,
which is done by the memory structure.
We argue that $\sigma$ ensures $B(N^k \cdot \alpha(d,W,k,N)) \cap \parity(\Omega)$.
A play consistent with this strategy is of the form 
$w = w_1 w_2 \cdots w_n w_\infty$,
where for all $i \in \set{1,\ldots,n,\infty}$, we have $\val(w_i) \le \alpha(d,W,k,N)$.
Denote $u = \summ(w_1) \summ(w_2) \cdots \summ(w_n) \summ(w_\infty)$,
we have $val(u) \le N^k$,
since it corresponds to a play consistent with $\sigma'$.
It follows from Fact~\ref{fact:summary}, item 2., that $\val(w) \le N^k \cdot \alpha(d,W,k,N)$.
Hence the strategy $\sigma$ ensures $B(N^k \cdot \alpha(d,W,k,N)) \cap \parity(\Omega)$.

\section*{Conclusion}
We studied the existence of a trade-off between bounds and memory in games with counters,
as conjectured by Colcombet and Loeding.
We proved that there is no such trade-off in general, but that under some structural restrictions,
as thin tree arenas, the conjecture holds.

We believe that the conjecture holds for all tree arenas, which would imply the decidability of cost MSO
over infinite trees. 
A proof of this result would probably involve advanced combinatorial arguments, and require a deep understanding
of the structure of tree arenas.

\section*{Acknowledgments}
The unbounded number of fruitful discussions we had with Thomas Colcombet and Miko{\l}aj Boja{\'n}czyk
made this paper possible.

\bibliographystyle{plain}
\bibliography{bib}

\end{document}